\documentclass[pdflatex,sn-mathphys-num]{sn-jnl}

\usepackage{graphicx}%
\usepackage{multirow}%
\usepackage{amsmath,amssymb,amsfonts}%
\usepackage{amsthm}%
\usepackage{mathrsfs}%
\usepackage[title]{appendix}%
\usepackage{xcolor}%
\usepackage{textcomp}%
\usepackage{manyfoot}%
\usepackage{booktabs}%
\let\orcidlogo\undefined%
\usepackage{orcidlink}%
\usepackage{placeins}%
\usepackage{fullpage}%

\makeatletter
\renewenvironment{table}[1][]{%
  \begin{tableorg}[#1]%
  \tablebodyfont%
  \begin{center}%
}{\end{center}\end{tableorg}}

\renewcommand\paragraph{\@startsection{paragraph}{4}{\z@}%
  {-12pt \@plus -4pt \@minus-2pt}%
  {-1em}%
  {\reset@font\fontsize{10bp}{12bp}\bfseries\itshape\selectfont}}
\makeatother

\theoremstyle{thmstyleone}%
\newtheorem{theorem}{Theorem}%
\newtheorem{proposition}[theorem]{Proposition}%
\newtheorem{corollary}[theorem]{Corollary}%

\theoremstyle{thmstyletwo}%
\newtheorem{remark}{Remark}%

\theoremstyle{thmstylethree}%
\newtheorem{definition}{Definition}%

\newcommand{\Vt}{{V_{t}}}
\newcommand{\psit}{{\Psi_{t}}}
\newcommand{\Hcons}{\hat{\mathcal{H}}^{\mathrm{c}}_{t}}
\newcommand{\Kop}{\hat{K}}
\newcommand{\prare}{{p_{\mathrm{rare}}}}

\newcommand{\ket}[1]{\left|#1\right\rangle}

\hypersetup{bookmarksdepth=3}

\definecolor{darkgreen}{RGB}{0,100,0}

\begin{document}

\title[Quantum simulation of generative models]{Scalable quantum simulation of continuous-time generative models via tensor networks}

\author*[1]{\fnm{Nathan X.} \sur{Kodama}\,\orcidlink{0000-0002-6818-9669}}\email{nathan@sygaldry.com}
\author[1]{\fnm{L. Andrew} \sur{Wray}\,\orcidlink{0000-0001-8977-0201}}
\author[1,2]{\fnm{Sam} \sur{Cochran}\,\orcidlink{0000-0002-5538-641X}}
\equalcont{All work performed at Sygaldry Technologies, Inc.}
\author[1]{\fnm{Chad} \sur{Rigetti}}
\author[1,2]{\fnm{Shravan} \sur{Veerapaneni}\,\orcidlink{0000-0002-2294-7233}}
\equalcont{All work performed at Sygaldry Technologies, Inc.}
\author[1]{\fnm{Michael J.} \sur{Keiser}\,\orcidlink{0000-0002-1240-2192}}

\affil[1]{\orgname{Sygaldry Technologies, Inc.},
\orgaddress{\city{Ann Arbor}, \state{MI}, \country{USA}}}
\affil[2]{\orgname{University of Michigan},
\orgaddress{\city{Ann Arbor}, \state{MI}, \country{USA}}}

\abstract{%
\unboldmath 
Continuous-time flow and diffusion models are widely used across
many application domains, from large-scale deployment in computer vision and protein
folding to emerging adoption for modeling language, time series, and
quantum states.
After training, inferring statistical properties from continuous-time models
is costly. Wavefunction flows target this cost by recasting learned
transport as unitary evolution, whose final Born distribution approximates the
target distribution. This prepares a coherent amplitude encoding (a \emph{qsample})
that can be post-processed by quantum algorithms offering a quadratic advantage
over Monte Carlo sampling. We present the first
numerical study of these flows, in which we represent time-dependent potentials and states as tensor networks. At spatial dimension $d{=}8$, storage falls
by $\sim\!10^7\times$ relative to the dense grid of $N^d$ points, and evolution
wall-clock time falls
by $\gtrsim\!10^3\times$ against a baseline extrapolated from the measured
$d\le5$ scaling. We validate our pipeline by reproducing the
$\mathcal{O}(1/\sqrt{\prare})$ scaling of rare-event sampling.
}

\keywords{tensor networks, matrix product states, tensor cross interpolation,
time-dependent variational principle, generative modeling, quantum simulation}

\maketitle

\begin{figure}[tbp]
\centering
\includegraphics[width=\textwidth]{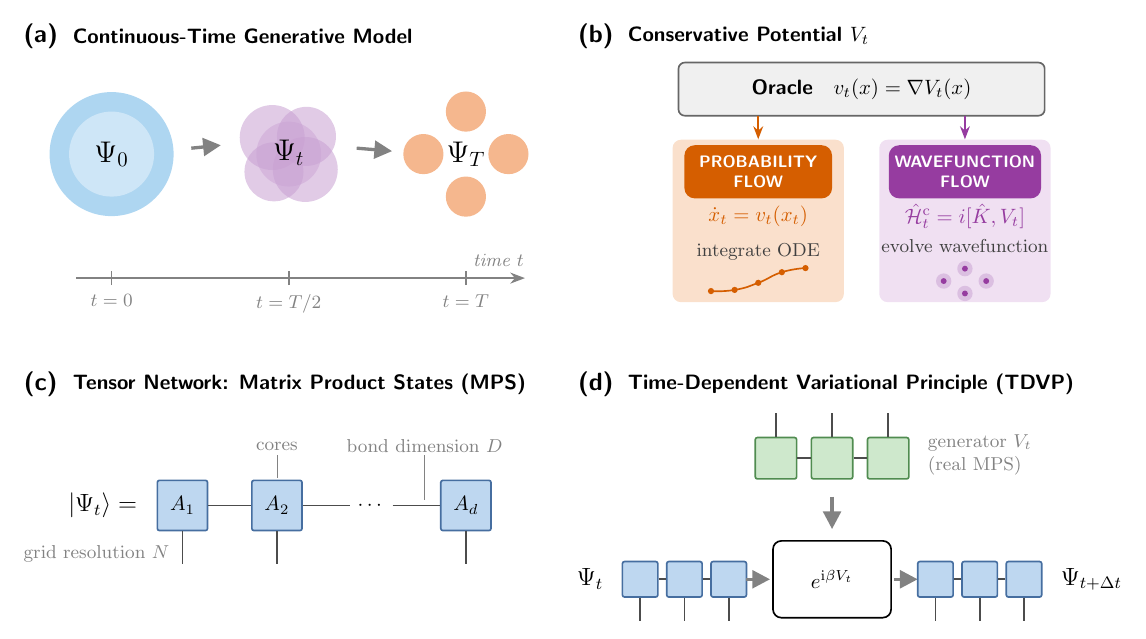}
\caption{\textbf{Wavefunction flows and their tensor network
realization.}
(a)~A Gaussian source wavefunction $\Psi_0$ is evolved through an
intermediate state $\Psi_t$ into a target $\Psi_T$ via the
Trotterized unitary $U(T) = \mathcal{T}\exp(-i\int_0^T \Hcons\,dt)$,
$\Hcons = i[\Kop,\Vt]$ (time-ordered because $\Vt$, hence $\Hcons$, is
time-dependent), along the generation-time axis $t$. Each Trotter
step alternates the kinetic step $\Kop$ (separable, applied via
one-dimensional Fourier transforms) with the position-diagonal potential step
$e^{i\beta\Vt}$. The final Born distribution supplies the coherent amplitude
representation required by amplitude amplification algorithms
when a suitable controlled preparation circuit is available.
(b)~A single velocity potential $V_t$, learned once by joint action
matching, drives two realizations of the same transport. The
\emph{probability flow} realization integrates the ordinary differential
equation (ODE)
$\dot{x}_t = v_t(x_t)$, $v_t = \nabla V_t$, to transport samples;
the \emph{wavefunction flow} realization evolves the wavefunction under
$\Hcons = i[\Kop,\Vt]$ with tensor network compression and samples
$|\Psi_T|^2$. This work develops and benchmarks a classical tensor network
simulation of the wavefunction realization.
(c)~State encoding for the flow wavefunction $\psit$: a matrix product
state (MPS) with
one rank-$\le D$ core per spatial axis (physical dimension $N$),
so the network has one site per spatial dimension $d$ and memory
$\mathcal{O}(d N D^2)$ in place of the $N^d$ dense grid.
(d)~The position-diagonal potential step $e^{i\beta\Vt}$ is applied
from the real generator $\Vt$ rather than as a generic matrix product
operator (MPO). Tensor cross interpolation (TCI) interpolates $\Vt$, and a time-dependent variational principle (TDVP) step evolves $\psit$ under it
(the two-site variant 2TDVP, and the one-site variant with subspace
expansion 1TDVP+SE, grow the bond adaptively; one-site 1TDVP alone is
fixed-rank);
a trained MPS $\Vt$ supplies the cores directly and removes the runtime TCI construction.}
\label{fig:overview}
\end{figure}

\section{Introduction}\label{sec:intro}

Continuous-time generative models including flow
matching~\cite{bib:flowmatching}, neural ordinary differential
equations~\cite{bib:chen2018}, and score-based
diffusion~\cite{bib:song2021scoresde} are widely used for modeling complex probability distributions, and because they
transport probability through physically motivated dynamics, they map
naturally onto physical hardware, including quantum processors. Once
trained, the model's statistical properties are costly to extract: any expectation, moment,
or property test must be estimated by averaging, so each added digit of precision
costs a hundredfold more samples. Wavefunction flows~\cite{bib:layden2025} target
this cost by recasting the classical continuity equation
transport that drives a learned source-to-target flow as
the squared modulus of a wavefunction evolving under a Hamiltonian
fixed by the generative model. The final state is a coherent
amplitude encoding, a \emph{qsample}~\cite{bib:aharonov2003} 
of the target distribution (Figure~\ref{fig:overview}). Coherent processing of qsamples yields a quadratic advantage over Monte Carlo sampling: amplitude
estimation measures a probability at
linear-in-precision rather than inverse-square cost~\cite{bib:bhmt,bib:montanaro2015}, and amplitude
amplification draws rare samples at $\mathcal{O}(1/\sqrt{\prare})$ rather
than the $\mathcal{O}(1/\prare)$ of rejection sampling.

Despite this promise, wavefunction flows have so far been a purely
theoretical proposal, unexplored beyond mathematical analysis,
because a direct numerical study is obstructed by the exponential cost of
representing the wavefunction on a dense grid. An eight-factor commutator product formula for the unitary (local
error $\mathcal{O}(\Delta t^2)$, first-order global convergence)
alternates a separable kinetic step on a periodic position grid, applied site-locally via one-dimensional
Fourier transforms, with a position-diagonal potential ($V$)
step that is cheap pointwise but whose exponentiation and dense-grid storage scale exponentially in the spatial dimension. At
a modest per-axis grid of $N{=}32$, the wavefunction already
comprises roughly $10^9$ amplitudes in six dimensions. The
$V$-step is therefore the dominant cost of classically simulating the
unitary; a reversible gate-level implementation would face the same step as
its bottleneck.

Motivated by the roles tensor networks already play in both simulating quantum
computation and synthesizing quantum circuits~\cite{bib:berezutskii2025tnqc}, we
use them to compress the $V$-step and represent the evolving wavefunction
$\psit$. The wavefunction is stored as a matrix
product state (MPS; the same object is called a tensor train in the numerical
analysis literature~\cite{bib:oseledets2011}, a convention we do not use here):
one rank-$\le D$ core per spatial axis (physical dimension $N$), with memory
$\mathcal{O}(dND^2)$ in place of the $N^d$ dense grid, where $d$ is the spatial
dimension. The velocity is the gradient of a learned scalar potential, so the
flow is parameterized by an energy function rather than by a velocity field,
which makes the continuity Hamiltonian well-defined. The potential step works
from the real \emph{generator} $V_t$ rather than any explicit operator: tensor
cross interpolation (TCI)~\cite{bib:ttcross, bib:xfac2025} interpolates $V_t$,
and a time-dependent variational principle (TDVP) step then applies
$e^{i\beta V_t}$ implicitly, growing the bond adaptively where the integrator
allows (Figure~\ref{fig:overview}c,d). Matrix product states suit this role for
two reasons. First, their time-evolution algorithms are established
machinery~\cite{bib:lubich2014, bib:haegeman}, giving an effective classical
ansatz for testing wavefunction flows beyond the dense-grid regime. Second,
under explicit gate, precision, and connectivity assumptions, a compact
final-state MPS is a candidate input for sequential state preparation: the state
is built on a number of ancilla qubits logarithmic in the bond dimension by a
depth-linear sequence~\cite{bib:schon2005seq, bib:ran2020encoding}, a route
already demonstrated on hardware for distribution-encoding
states~\cite{bib:iaconis2024}. Such an MPS also realizes exactly the
distribution class of polynomially deep tensor network
circuits~\cite{bib:huggins2019tn, bib:glasser2019tnfactor}, and a classically
optimized one can initialize a parametrized quantum circuit, mitigating
barren plateaus and warm-starting circuit
training~\cite{bib:rudolph2023synergistic}. The compressed object is thus a
classical precursor to a quantum circuit, not only a target for exact loading.

We make three contributions. First, to the best of
our knowledge, we present the first numerical study of wavefunction flows, which until now has been a purely theoretical proposal. Representing
the evolving wavefunction as a tensor network compresses the one
step whose cost otherwise grows exponentially with dimension: on
Gaussian mixture targets from two to eight dimensions, our method
matches exact (dense-grid) simulation in sample quality while using
$\sim\!10^7\times$ less memory and $\gtrsim\!10^3\times$ less evolution
wall-clock time, and runs in dimensions the dense-grid method cannot reach. Second, we
demonstrate tensor network evolution with a trained MPS
potential: rather than rebuilding the phase operator at every step, we
train the velocity potential once, directly as a real-valued MPS, and
consume its cores in the TDVP evolution. This gives an end-to-end \emph{classical}
tensor network pipeline and enables the orthogonal Gaussian mixture study up to
$d=32$. Third, we validate our pipeline by reproducing the
$\mathcal{O}(1/\sqrt{\prare})$ scaling of rare-event sampling. The rare-event
mass survives compression---the prepared state populates the far tail of every
mode of the target---and drawing from those tails by amplitude amplification
costs $2.5\times$ fewer state preparations per accepted rare sample than
classical rejection sampling.

Together, these results establish tensor networks as a practical numerical
framework for wavefunction flows and expose the low-rank regime in which the
method is classically simulable. The compact states produced by the
calculation may also serve as inputs to future state preparation studies.

\section{Results}\label{sec:results}

\emph{Datasets.} We study three target distributions of increasing
geometric complexity. The two-dimensional Swiss roll is a curved,
multi-scale manifold that stresses transport accuracy along a thin,
winding support. The Laplace distribution carries an amplitude cusp at
the origin whose non-smoothness sets a nontrivial $N^{-3/2}$ spectral
error floor and fixes the grid resolution requirement. The orthogonal
Gaussian mixtures place $2d$ well-separated modes on the coordinate
axes and extend systematically with dimension, providing the setting for the
dimensional scaling study ($d{=}2$--$8$, and to $d{=}32$ for the flow under the trained MPS potential) and for the downstream rare-event sampling demonstration.

\emph{Models.} Each target is evolved as a wavefunction flow under the continuity Hamiltonian
$\Hcons = i[\Kop,\Vt]$, Trotterized into
alternating kinetic ($\Kop$) and potential ($V$) steps. The kinetic step
is diagonal in the Fourier basis, so the classical cost concentrates in
the position-diagonal $V$-step, which we compress with tensor networks.
A single run fixes the per-axis grid size $N$, the Trotter step count
$K$, and the bond cap $D_{\max}$, evolves the state, and draws samples
directly from the final Born density $|\Psi_T|^2$.
We write $D$ for an imposed bond-dimension cap and $\chi$ for the bond
dimension a run actually attains, so that $\chi \le D_{\max}$ by
construction.
We benchmark the TCI+1TDVP and TCI+2TDVP integrators, and a
pre-trained MPS potential fed directly to the V-step with no per-step TCI
rebuild of $\Vt$, against a dense-grid
pseudospectral baseline and against the classical joint action matching (JAM)~\cite{bib:jam2023} sampler the flow emulates. We score sample
quality using the sliced Wasserstein distance (SW) and maximum mean
discrepancy (MMD); the prepared state then feeds into amplitude amplification for the
rare-event task. The results proceed from
dense-grid validation in two dimensions (Section~\ref{sec:results-2d}),
through sample quality and cost scaling across dimension
(Sections~\ref{sec:results-tsne} and~\ref{sec:results-cost}), to the
fully trained tensor network pipeline up to $d{=}32$
(Section~\ref{sec:results-bypass}) and its rare-event readout
(Section~\ref{sec:results-rare}).

\begin{figure}[tbp]
\centering
\includegraphics[width=0.95\textwidth]{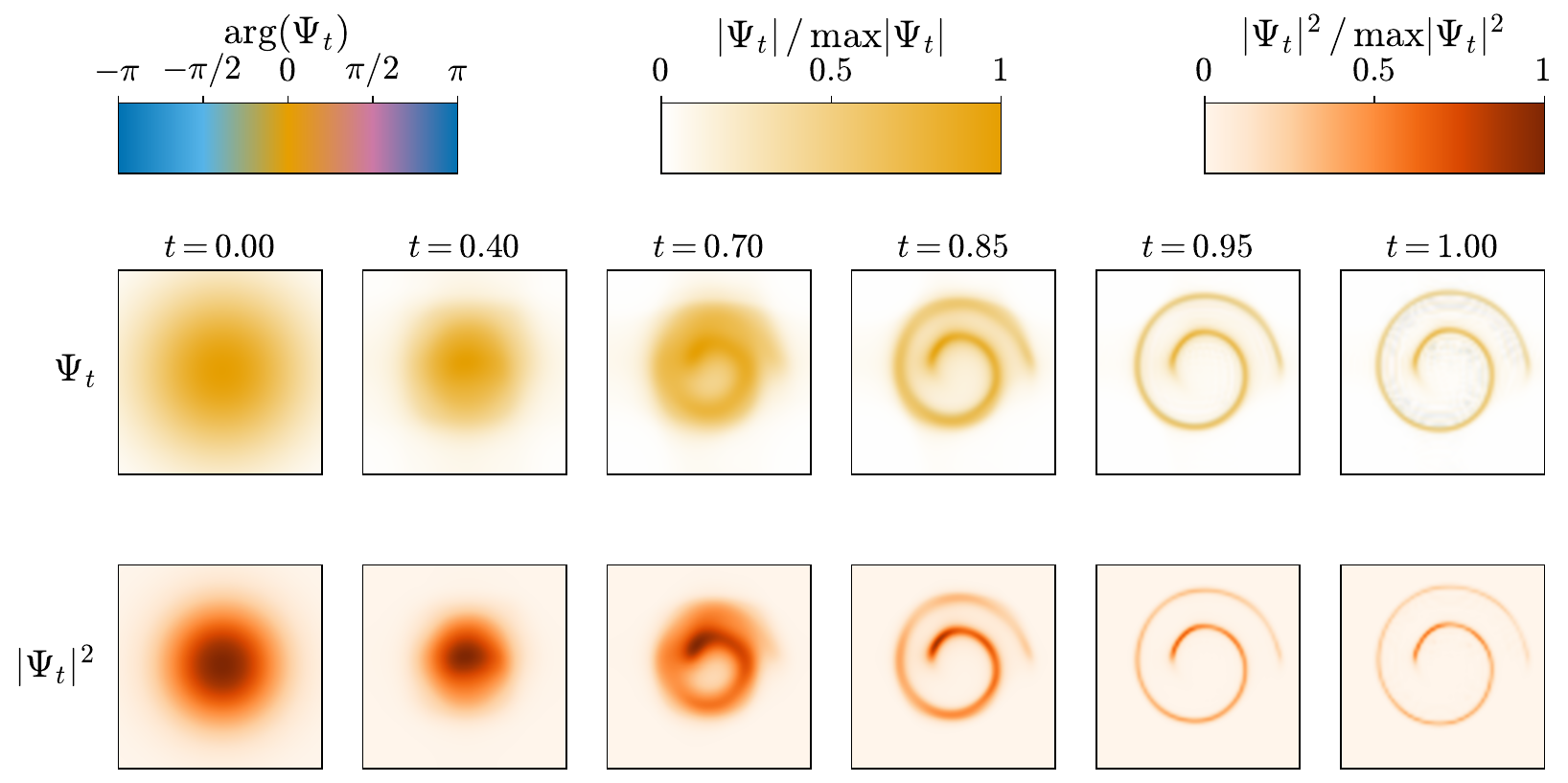}
\caption{\textbf{Dense-grid wavefunction evolution at $N{=}64$ across
generation time for the two-dimensional Swiss roll target.} Top row: the
wavefunction $\Psi_t$ visualized by hue $\!=\!\arg\Psi_t$ and
brightness $\!=\!|\Psi_t|/\max|\Psi_t|$. Bottom row: the probability
mass $|\Psi_t|^{2}$. Columns show
$t \in \{0,\,0.40,\,0.70,\,0.85,\,0.95,\,1.00\}$, spaced more finely
toward $t{=}1$ where the transport concentrates. The top-row hue
($\arg\Psi_t$) stays near $0$ throughout, consistent with the trivial-phase evolution of the wavefunction flow (Supplementary Section~\ref{app:a1}). The dynamics
under $\Hcons = i[\Kop, V_t]$ smoothly
transport the Gaussian source into the prescribed target
distribution. SW and MMD reported in the remaining figures of
the paper are computed against samples drawn from
$|\Psi(t{=}1)|^{2}$.}
\label{fig:viz2d}
\end{figure}

\subsection{Dense-grid simulation and error scaling}
\label{sec:results-2d}
We first validate the wavefunction flow pipeline on two
interpretable two-dimensional targets using direct (dense-grid wavefunction)
simulation: we evolve the full wavefunction on the $N^d$ grid with no
tensor network truncation, so that only the physical grid and Trotter
errors are present. On the Swiss roll
manifold (Figure~\ref{fig:viz2d}), the dense-grid state $\Psi_t$ evolves
smoothly under $\Hcons = i[\Kop, V_t]$ and the final mass
$|\Psi_T|^2$ reproduces the target. For the two-dimensional orthogonal Gaussian
mixture, where the velocity $v_t = \nabla V_t$ is available in
closed form, the equivalent classical probability flow picture (Figure~\ref{fig:gmm-ode}) integrates source samples under
$\dot{x}_t = v_t(x_t)$ onto the target's four modes along the radial
transport encoded by the analytic $V_t$; this is the continuum limit
of the flow, $|\Psi_T|^2 = p_1$ as an exact identity
(Supplementary Section~\ref{app:a1}). Sample quality is measured using the sliced Wasserstein distance and
maximum mean discrepancy~\cite{bib:gretton2012}. Both are distances between two sample sets (Eq.~\eqref{eq:sw}); throughout we
abbreviate $\mathrm{SW}(q) \equiv \mathrm{SW}(\{x\}_q, \{x\}_{p_1})$, where $q$ is
any time-dependent source and $\{x\}_q$ its samples---Born samples when $q$ is a
wavefunction, transported samples when $q$ is a marginal density. A single
argument therefore always names the source being scored against the target.

The dense-grid flow also lets us compare the grid-free error terms against
their predicted rates before compression is applied
(Figure~\ref{fig:scaling-bounds}). On exact $d{=}2$ evolutions, the
spatial/grid error $\varepsilon_{\mathrm{grid}}$ follows the Sobolev
rate $N^{-3/2}$ fixed by the kink in the Laplace target amplitude
($\Psi_T \in H^{s}$ for every $s<3/2$), while the Trotter error
$\varepsilon_{\mathrm{time}}$ follows the global $O(K^{-1})$
order---the accumulation of the local $O(\Delta t^2)$ product-formula
step---with a dimensional constant that stays within the $d^2$ bound;
both are consistent with the wavefunction flow error
theory~\cite{bib:layden2025}. These are two of the three additive
error sources that organize our study---grid, Trotter, and
compression $\sum_k \eta_k$---the third of which enters only with the
matrix product state compression. In the dimension sweeps of
Sections~\ref{sec:results-tsne}--\ref{sec:results-cost}, we report
the total error via SW, and we isolate the compression term in
Section~\ref{sec:results-bypass}.

\begin{figure}[tbp]
\centering
\includegraphics[width=0.95\textwidth]{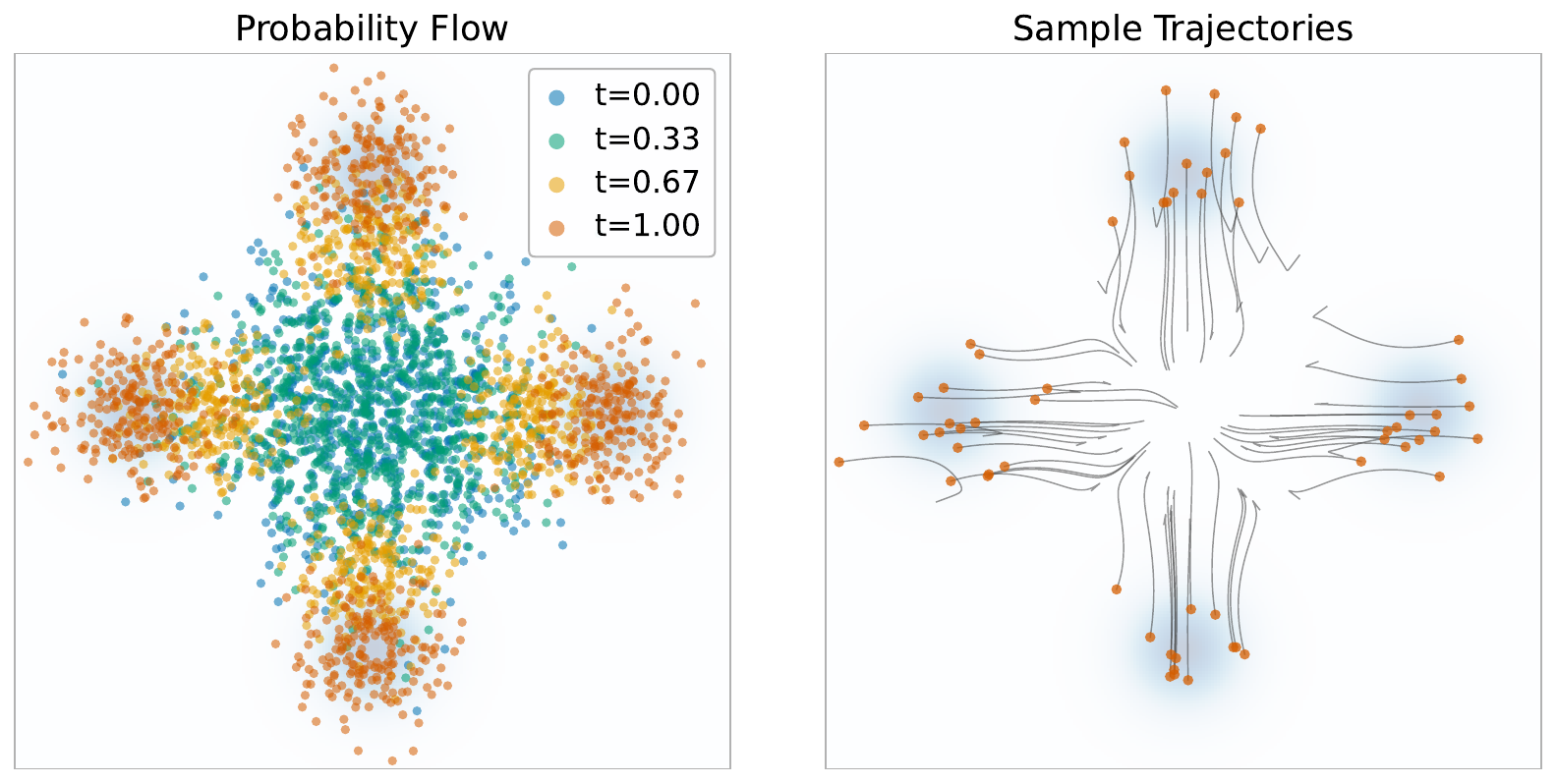}
\caption{\textbf{Probability flow view of the two-dimensional orthogonal Gaussian mixture target.} Source samples $x_0 \sim \mathcal{N}(0,
\sigma_0^2 I)$ are integrated under $\dot{x}_t = v_t(x_t)$, with
$v_t = \nabla V_t$ taken from the analytic potential of
Supplementary Eq.~\eqref{eq:Vt-gmm}. (Left) Sample positions at four time
points $t \in \{0, 0.33, 0.67, 1.0\}$, colored by time and
overlaid on the faded target density. (Right) Trajectories of
the same samples from $t=0$ to $t=1$ as gray lines, with
$t=1$ endpoints highlighted; the radial structure of the
transport is set by the four orthogonal modes of the target.
This is the classical limit of the wavefunction flow:
$|\Psi_T|^2 = p_1$ as an exact identity in the continuum (see
Supplementary Section~\ref{app:a1}).}
\label{fig:gmm-ode}
\end{figure}

\begin{figure}[tbp]
\centering
\includegraphics[width=\textwidth]{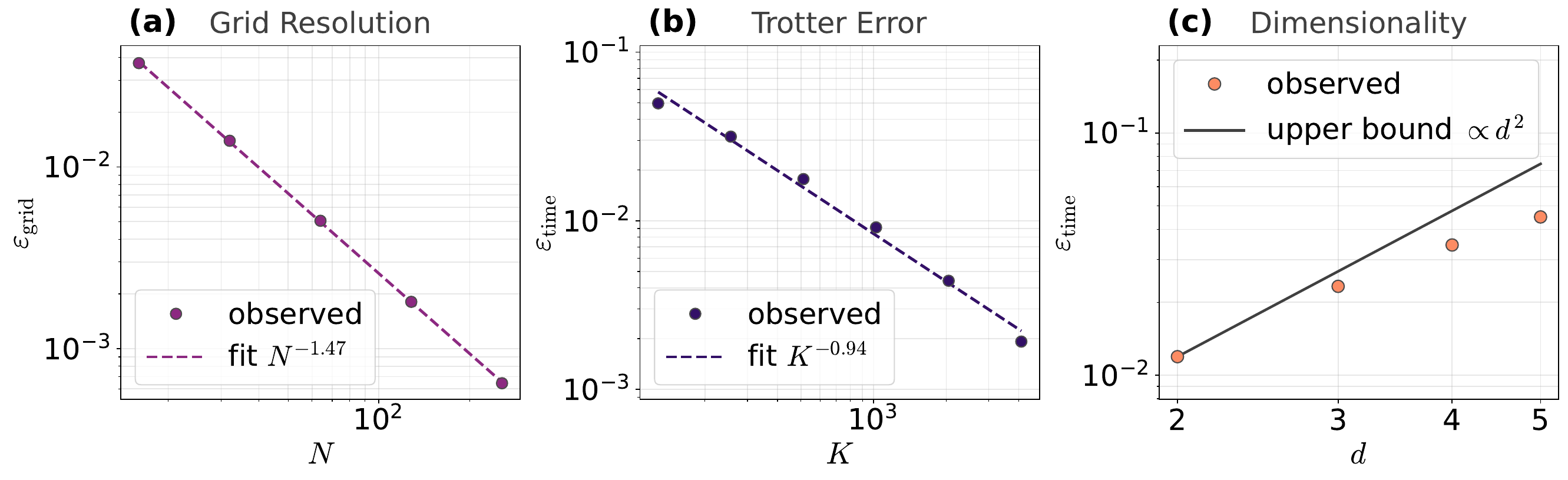}
\caption{\textbf{Error scaling of the dense-grid wavefunction flow matches the
predicted decomposition.} Two grid-free error sources, isolated on exact
$d{=}2$ evolutions (no tensor network truncation) and measured with a
phase-aligned state metric. (a)~Grid/representation error
$\varepsilon_{\mathrm{grid}} = \|(1{-}P_N)\Psi_T\| / \|\Psi_T\|$, the relative
$L^2$ Fourier-truncation error of the ideal target amplitude
$\Psi_T \propto \sqrt{p_T}$, versus grid resolution $N$. For a Laplace target
the amplitude $\sqrt{p} = e^{-|x|}$ has a kink, so $\Psi_T \in H^{s}$ for
every $s<3/2$ and the derivative discontinuity produces Fourier coefficients
consistent with $\varepsilon_{\mathrm{grid}} \sim N^{-3/2}$ (fit $N^{-1.47}$). (b)~Trotter
error $\varepsilon_{\mathrm{time}} = \|\Psi_K - \Psi_\infty\|$, the state distance
between the $K$-step product-formula flow and its $K\to\infty$ limit, versus
Trotter steps $K$ (at $N{=}16$, where $K \gg N^2$ is reachable). The global order
is $O(K^{-1})$ (fit $K^{-0.94}$)---the accumulation of the local $O(\Delta t^2)$
step error over $K = T/\Delta t$ steps. (c)~The same
$\varepsilon_{\mathrm{time}}$ versus dimensionality $d$ (fixed $K,N$): the realized
growth ($\sim d^{1.5}$) stays within the $d^2$ coefficient upper bound. Dense-grid
propagation isolates these two terms; the tensor network ansatz adds a third,
compression $\sum_k \eta_k$ (Section~\ref{sec:results-bypass}).}
\label{fig:scaling-bounds}
\end{figure}

\subsection{Sample quality across spatial dimensions}
\label{sec:results-tsne}
In Figure~\ref{fig:tsne} we overlay the generated TCI+1TDVP samples on
the true target---together with an independent target draw (labeled
\emph{Exact}) as a finite-sample reference---at $d \in \{3, 5, 7\}$, in a panel-local
t-SNE~\cite{bib:tsne2008} embedding. At every $d$, the TDVP samples
reproduce the target mode structure, resolving the same number of
distinct clusters as the target and achieving visually comparable occupancy. We
read t-SNE only qualitatively: it distorts global inter-cluster
distances and densities, so neither cross-panel coordinates nor
inter-cluster ratios are quantitative (quantitative accuracy is the
per-panel sliced Wasserstein, annotated for each panel). The annotation is
the plotted run's own SW; in Table~\ref{tab:sw} we report the best cell
by seed-averaged SW, which need not be the same run.

\begin{figure}[tbp]
\centering
\includegraphics[width=0.95\textwidth]{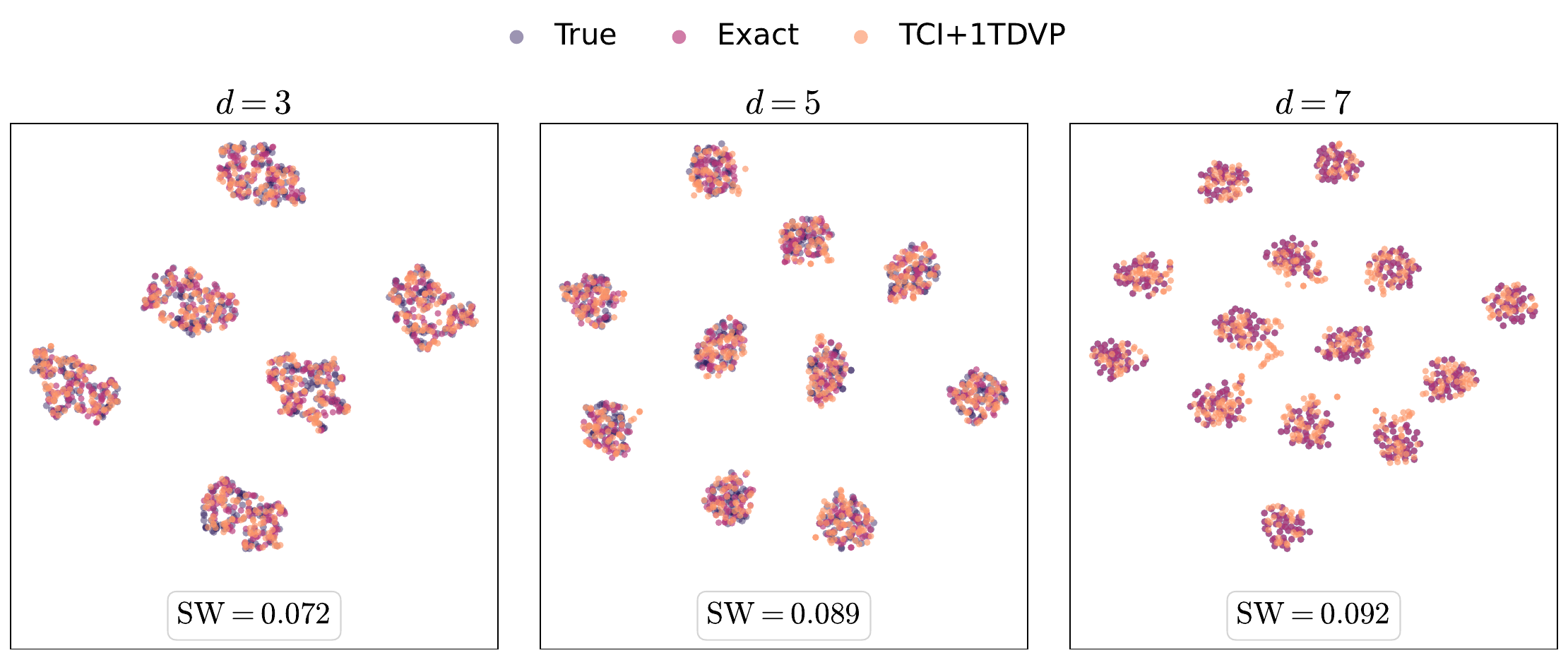}
\caption{\textbf{Low-dimensional visualization (t-SNE) overlays at $d = 3, 5, 7$ for orthogonal
Gaussian mixtures.} Each panel fits a panel-local t-SNE and overlays,
on a magma scale from darkest to lightest, the true target samples
(darkest), an independent target draw as a finite-sample reference
(labeled \emph{Exact}), and the generated TCI+1TDVP samples (lightest); the
sliced Wasserstein of the plotted run is annotated. That is a per-run value
and need not coincide with Table~\ref{tab:sw}, which minimizes over cells
after averaging across seeds. The TCI+1TDVP samples resolve the same
cluster structure as the target at every $d$; the embedding is
qualitative (t-SNE does not preserve global distances or density), and
the quantitative agreement is the annotated sliced Wasserstein.}
\label{fig:tsne}
\end{figure}

Quantitatively, we report in Table~\ref{tab:sw} the best sliced Wasserstein
across the hyperparameter sweep for the dense-grid baseline and the TDVP
V-step variants. Because each cell is estimated from at most three seeds,
minima taken over the search box are biased low by cell selection; the
replicated $n{=}10$ audit of Supplementary Section~\ref{app:audit} addresses this for
JAM and TCI+1TDVP by re-evaluating located optima over ten fresh seeds
(under a trained rather than analytic potential) and confirms the
ordering. TCI+1TDVP is at least as accurate as the JAM classical
baseline at six of the seven dimensions, and its worst case is a
factor of $1.06$ (at $d{=}8$).
The TCI+1TDVP best-cell SW values stay within a factor of $1.7$
of the target--target reference across the full
$d=2{-}8$ range ($0.115, 0.081, 0.095, 0.092, 0.093, 0.092, 0.100$),
with no upward drift in $d$. We caveat that orthogonal
Gaussian mixture targets may not fully stress-test the method at
high $d$: the support of the target mixture occupies an exponentially
small fraction of the $[0,L)^d$ volume, so the relative target
structure becomes simpler as $d$ grows. The trained MPS potential scaling study (Section~\ref{sec:results-bypass}) pushes this same orthogonal
target to $d = 32$ under a learned MPS velocity potential; a more
decisive stress test on a fully high-dimensional target with rich
correlations is left to future work.

\begin{table}[!htbp]
\centering
\caption{\textbf{Best-cell sliced Wasserstein for the dense-grid baseline
and TDVP V-step variants, by spatial dimension.} Values are the minimum
over the hyperparameter
search box (mean over $\le 3$ seeds when available); -- means
the method is infeasible or unsampled at that $d$. The
target--target reference is the SW between two independent
finite samples of the target ($n=500$ samples, 128 random
projections; see Section~\ref{app:b8}) --- a finite-sample scale, not
an irreducible floor; the method rows use the same $n=500$. Cells cluster
tightly near the minimum, and the gap
between the two tensor network integrators is comparable to the spread of an
individual cell (median $0.005$ across cells, combining seed-to-seed
variation with the Monte Carlo spread of the SW estimator itself), so over
the searched box TCI+1TDVP and TCI+2TDVP are not distinguishable on
accuracy, consistent with Section~\ref{sec:discussion}, which separates
them on cost. A separate
replicated $n{=}10$ audit under a trained velocity potential
(Supplementary Section~\ref{app:audit}, JAM and TCI+1TDVP only) validates, with a
second metric (MMD), that TCI+1TDVP tracks the JAM baseline. That audit
includes neither Dense-grid nor TCI+2TDVP, so it does not re-rank them.}
\label{tab:sw}
\begin{tabular}{lccccccc}
\toprule
Method & $d=2$ & $d=3$ & $d=4$ & $d=5$ & $d=6$ & $d=7$ & $d=8$ \\
\midrule
Target--target (finite-sample) & 0.069 & 0.068 & 0.066 & 0.065 & 0.063 & 0.062 & 0.060 \\
JAM & 0.131 & 0.143 & 0.133 & 0.102 & 0.106 & 0.101 & 0.094 \\
\midrule
Dense & 0.107 & 0.092 & 0.086 & 0.104 & -- & -- & -- \\
TCI+1TDVP & 0.115 & 0.081 & 0.095 & 0.092 & 0.093 & 0.092 & 0.100 \\
TCI+2TDVP & 0.118 & 0.084 & 0.098 & 0.099 & 0.111 & 0.091 & 0.100 \\
\bottomrule
\end{tabular}
\end{table}

\subsection{Scalable classical simulation via tensor networks}
\label{sec:results-cost}
Here, the tensor network $V$-step makes wavefunction flow simulation
scalable in $d$. In Figure~\ref{fig:cost} we trace the
best-cell trajectory of the TDVP V-step variants across
$d \in \{2, \ldots, 8\}$ relative to the dense-grid baseline. Two ratios
measure the saving. The \emph{compression ratio}
$\mathcal{R} = N^d/(2N\chi + (d{-}2)N\chi^2)$
(Figure~\ref{fig:cost}a) divides the dense-grid amplitude count by the MPS
parameter count at the achieved bond dimension $\chi$, and the
\emph{acceleration factor}
$\mathcal{S} = t_{\mathrm{Dense\text{-}grid}}/t_{\mathrm{MPS}}$
(Figure~\ref{fig:cost}b) divides the dense-grid evolution wall-clock time by
the tensor network's; values above unity therefore mean the tensor network is
the more compact and the faster representation, respectively. The
corresponding best-cell accuracy (SW) is shown in
Figure~\ref{fig:cost}c. The dense-grid baseline
becomes intractable above $d \!\approx\! 5$: at $d{=}5$ a single
Dense-grid run already costs ${\sim}10^{4}$\,s and ${\sim}1$\,GB of
grid storage, and at $d \ge 6$ the $32^d$ amplitude tensor exceeds
container memory at the modest grid resolution we use. Within the
regime where the comparison is measured, TCI+1TDVP crosses the
cost--accuracy Pareto frontier of Dense-grid near $d{=}4$ and
TCI+2TDVP near $d{=}5$; above these crossovers, the tensor network
methods are cheaper than Dense-grid at matching SW, the one exception being
TCI+2TDVP at $d{=}6$ ($0.9\times$), where the Dense-grid wall-clock time is
projected rather than measured. The MPS
compression factor rises by roughly an order of magnitude per added
dimension. At $d{=}8$, the TCI+1TDVP best cell uses approximately
seven orders of magnitude less memory and at least three orders of
magnitude less evolution wall-clock time than an extrapolated dense-grid baseline
(cost models in Supplementary Section~\ref{app:a6}); the
$d{=}8$ dense-grid numbers are extrapolated from the measured $d \le 5$ scaling,
because Dense-grid is itself infeasible there. The underlying cost--accuracy
Pareto frontiers (at $d{=}5$ and faceted across $d$) are shown in
Supplementary Section~\ref{app:supp-results} (Figure~\ref{fig:supp-pareto}).
In this analytic-$V$ sweep (runtime-TCI V-step) the Pareto-optimal cells sit at the
search-grid maxima ($K=64$ Trotter steps, $D=32$ at $d=8$): within the
searched box, the residual Trotter error $\mathcal{O}(1/K)$ still dominates
the slowly accumulating compression error $\sum_k\eta_k$, so accuracy
improves monotonically to the largest $K$ tried and a larger box may yield
further gains. This is a different regime from the trained MPS potential audit
of Supplementary Section~\ref{app:audit}, where coordinate descent over a wider $K$ range
locates an interior optimum $K^\star$ once accumulated compression
error overtakes the Trotter gain; both observations reflect the same tradeoff
between Trotter error and accumulated compression error.

\begin{figure}[tbp]
\centering
\includegraphics[width=0.98\textwidth]{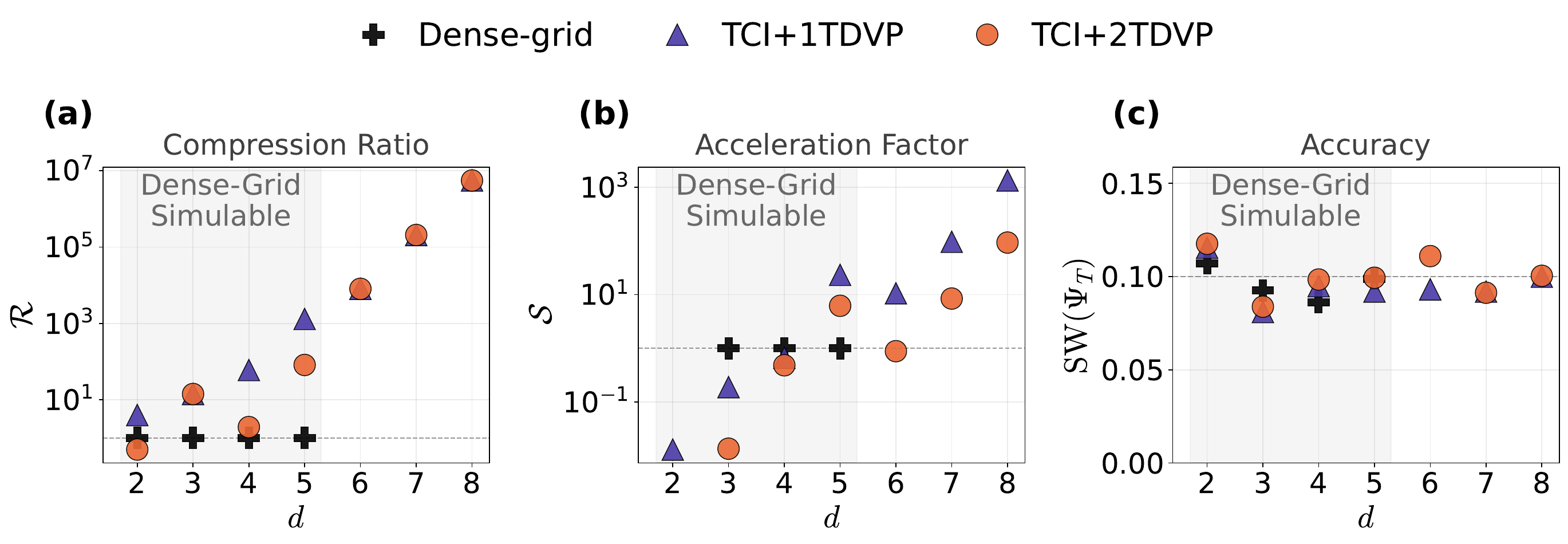}
\caption{\textbf{Tensor network costs scale polynomially in $d$ at
fixed accuracy.} Best-cell trajectory of the TDVP V-step variants
(minimum mean sliced Wasserstein over the hyperparameter search box,
analytic $V_t$) relative to the Dense-grid baseline across
$d \in \{2,\ldots,8\}$; the underlying data are the
same runs that produce Table~\ref{tab:sw}. (a)~Compression ratio
$\mathcal{R} = N^d/(2N\chi + (d{-}2)N\chi^2)$, the Dense-grid amplitude count
over the MPS parameter count at the achieved bond dimension $\chi$;
$\mathcal{R}$ rises by roughly an order of magnitude per
added $d$, reaching ${\sim}10^{7}$ at $d{=}8$. The shaded band in
all three
panels marks the regime where the Dense-grid baseline is feasible and
was run ($d \le 5$); every tensor network marker is a measured run.
(b)~Evolution acceleration factor
$\mathcal{S} = t_{\mathrm{Dense\text{-}grid}}/t_{\mathrm{MPS}}$, the Dense-grid evolution
wall-clock time over the tensor network's, so that values
above unity mean the tensor network is faster. The wall-clock time is summed per Trotter step and excludes the per-step metric
callback; that callback is under $3\%$ of the enclosing loop at $d\ge4$
but over $90\%$ at $d{=}2$, enough to invert the apparent ordering of
$d{=}2$ and $d{=}3$ if included. TCI+1TDVP crosses above unity near
$d{=}4$ and TCI+2TDVP near $d{=}5$, after which the
tensor network methods are cheaper than Dense-grid at matching
accuracy, the one exception being TCI+2TDVP at $d{=}6$ ($0.9\times$).
The Dense-grid numerator is measured at $d\in\{3,4,5\}$ and is a
log-linear fit through those three points elsewhere, the $N^d$ grid being
infeasible above $d{=}5$; the $d{=}6$ exception falls in the projected
regime. The tensor network denominators are
measured from runs at each shown $d$.
(c)~Best-cell accuracy $\mathrm{SW}(\Psi_T)$ (sliced Wasserstein of the
endpoint state) versus $d$: TCI+1TDVP and
TCI+2TDVP stay within a factor of $1.7$ of the target--target
reference across the full range, without upward drift in $d$, and track
the Dense-grid baseline wherever it is feasible. The cost--accuracy Pareto frontiers are given in the
supplement (Supplementary Section~\ref{app:supp-results},
Figure~\ref{fig:supp-pareto}); a replicated $n{=}10$
audit with a second metric appears in Supplementary Section~\ref{app:audit}.}
\label{fig:cost}
\end{figure}

\subsection{Tensor network evolution with a trained MPS potential}
\label{sec:results-bypass}
The most direct realization of a wavefunction flow pre-trains the velocity potential as a tensor network, removing the runtime TCI construction entirely;
the choice of integrator then sets the balance between cost and accuracy. In
Sections~\ref{sec:results-2d}--\ref{sec:results-cost}, the V-step applies $\exp(\mathrm{i}\beta V_t)$ by building a fresh TCI approximation of the \emph{generator} $V_t$ at every Trotter substep (required because $V_t$
is time-dependent) and letting TDVP apply its exponential implicitly ---
the exponential itself is never crossed, precisely because the rank of
the unitary can inflate far above that of the generator
(Supplementary Remark~\ref{rem:tdvp-vs-tci}). That per-substep TCI build of the generator is the
dominant per-step cost. We remove it
entirely by training the velocity potential once, directly as a
real-valued MPS, and handing its cores to the V-step. Because
the learned cores are real, they define a real diagonal generator whose exact
exponential $\exp(\mathrm{i}\beta V_t)$ is unitary; the TDVP V-step approximates
its action on the MPS manifold directly from the $V_t$ cores contracted against
the wavefunction boundary tensors, with no runtime TCI at all (the trained MPS $V$-step; Section~\ref{app:trained-v}). We parameterize the trained oracle as
a \emph{trained MPS potential} (one tensor site per spatial axis, local dimension
$N$), fit via joint action matching (JAM; Supplementary Section~\ref{app:a3}), using the
same objective as the classical baseline.

\begin{figure}[t!]
\centering
\includegraphics[width=0.95\textwidth]{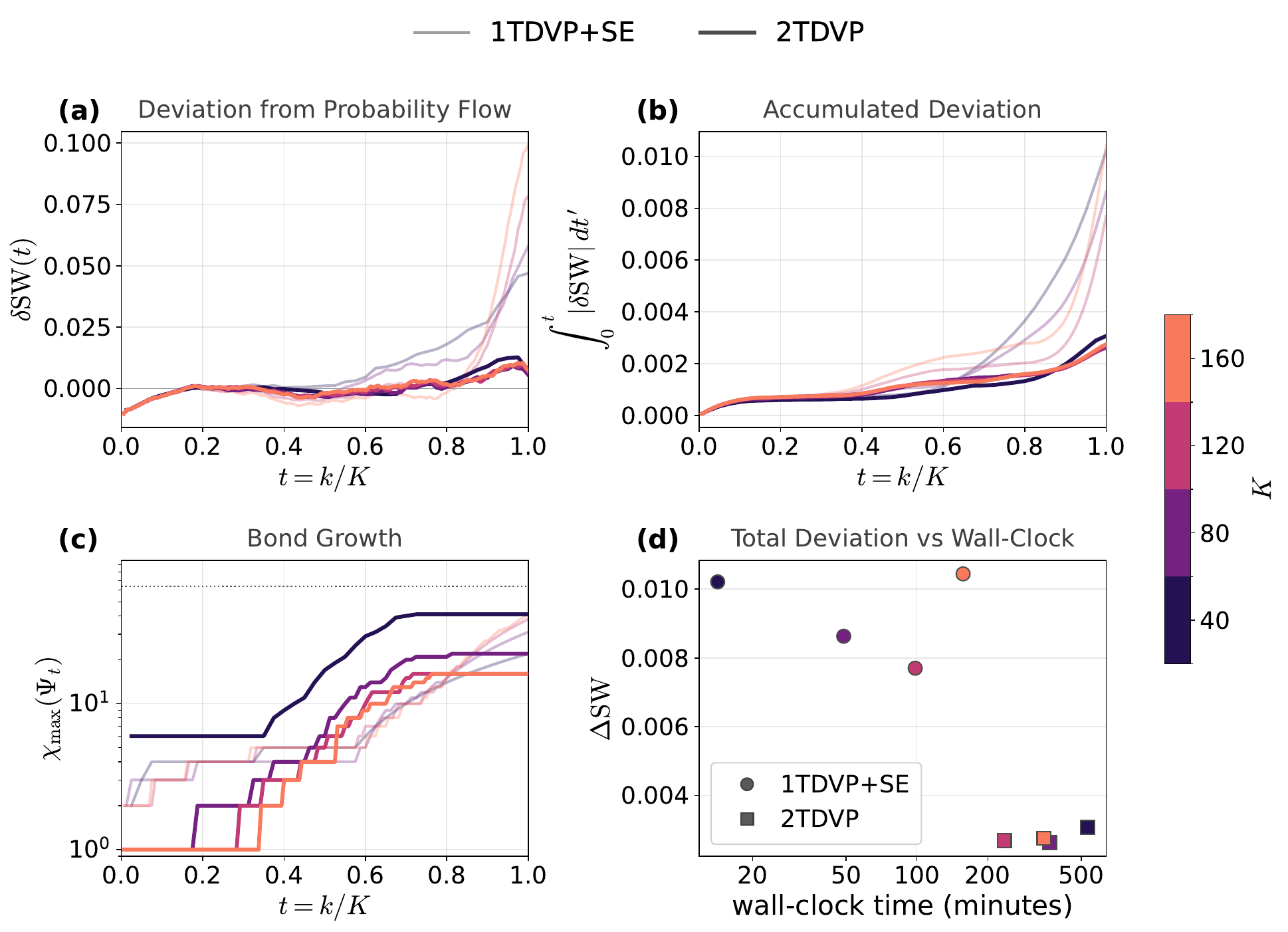}
\caption{\textbf{Deviation from the probability flow grows as the flow
reaches the target and the V-step integrator sets the cost--accuracy
balance.} Gaussian mixture at $d{=}8$, $\sigma{=}0.5$; the trained MPS potential
cores are fed directly to the V-step, with no runtime TCI.
Two encodings
run through panels (a--c): color denotes the Trotter count
$K \in \{40, 80, 120, 160\}$ (dark to light), and line style denotes
the integrator---pure 2TDVP (two-site update every Trotter step, thick lines)
versus 1TDVP with demand-driven subspace expansion (1TDVP+SE, thin lines);
panel~(d) instead marks the integrator by symbol (circles: 1TDVP+SE;
squares: 2TDVP). (a)~Instantaneous deviation from the probability flow,
$\delta\mathrm{SW}(t) = \mathrm{SW}^{\mathrm{wf}}(\psit) - \mathrm{SW}^{\mathrm{pf}}(p_t)$,
where $\mathrm{SW}(q)$ is the distance from samples of the source $q$ to samples
of the target---$\psit$ sampled by Born measurement, the marginal density $p_t$
by transport---the two track closely until late in the flow;
(b)~its cumulative integral $\int_0^t |\delta\mathrm{SW}|\,dt'$.
(c)~Bond dimension $\chi_{\max}(\Psi_t)$---pure 2TDVP
self-limits $\chi$ by truncation ($\chi^\star\!\approx\!16$) while
1TDVP+SE grows $\chi$ at the larger $K$ needed to suppress Trotter error.
(d)~Total deviation $\Delta\mathrm{SW} = \int_0^1 |\delta\mathrm{SW}|\,dt$ versus
measured wall-clock time: a cost--accuracy frontier---pure 2TDVP attains lower deviation at a modest ($\sim\!2$--$7\times$)
wall-clock cost over 1TDVP+SE (up to $\sim\!10$--$40\times$ at small $K$).}
\label{fig:gmm-bypass}
\end{figure}

In Figure~\ref{fig:gmm-bypass} we compare the two V-step integrators under the
trained cores on the $d{=}8$ Gaussian mixture. Write
$\delta\mathrm{SW}(t) = \mathrm{SW}^{\mathrm{wf}}(\psit) - \mathrm{SW}^{\mathrm{pf}}(p_t)$
for the instantaneous deviation of the wavefunction flow from the probability
flow, where $p_t$ is the marginal density the probability flow transports, and
$\Delta\mathrm{SW} = \int_0^1 |\delta\mathrm{SW}(t)|\,dt$ for the deviation
accumulated over the whole flow. Pure 2TDVP truncates the
subspace-expansion roughness every Trotter step, so it self-limits its
bond (Figure~\ref{fig:gmm-bypass}c; $\chi^\star\!\approx\!16$) and reaches
a lower final sliced Wasserstein, whereas 1TDVP+SE must grow $\chi$ at the
larger $K$ needed to control the Trotter splitting
(Figure~\ref{fig:gmm-bypass}a,b,d; Section~\ref{app:se-adaptive}). The accuracy of the
$V$-step is thus limited by the Trotter-step count rather than by the
bond capacity.

\paragraph{Scaling the trained MPS potential flow}
Feeding the trained cores to the 2TDVP V-step (no runtime TCI), we evolve the full wavefunction flow on the orthogonal Gaussian mixture and measure how far this flow scales while staying faithful to the target. In
Table~\ref{tab:scaling} we report, for the trained MPS potential across
$d \in \{8, 12, 16, 32\}$ at fixed $K{=}160$, the endpoint
sliced Wasserstein SW$^{\mathrm{wf}}_{T}$ (mean $\pm$ 95\% confidence
interval over $10$ independent initializations of the trained MPS
potential), the probability flow
action matching floor SW$^{\mathrm{pf}}_{T}$, and the achieved bond
$\chi^\star$. Up to $d{=}32$, the flow stays within a small,
dimension-stable factor of that floor, with a self-limiting
bond that stays below the cap ($\chi^\star < D_{\max}{=}64$ across all
replicates) and does not grow with $d$, so the cap is never hand-tuned. This is an
orthogonal Gaussian mixture scaling result, not a claim about high-dimensional
correlated structure (the caveat of Section~\ref{sec:results-tsne}). The compact bond dimensions here enable efficient
simulation and representation within this family; they do not, by themselves,
establish a preparation circuit or a quantum speedup.

\begin{table}[tbp]
\centering
\caption{\textbf{Trained MPS potential tensor network scaling on the orthogonal
Gaussian mixture.} The learned real-valued MPS velocity potential (bond $D{=}8$)
fed directly to the 2TDVP V-step, eliminating the runtime TCI construction; fixed
$K{=}160$ Trotter steps across $d$. Each row averages $n{=}10$
independent initializations of the trained MPS
potential. Columns: spatial dimension $d$;
oracle parameter count; endpoint sliced Wasserstein
SW$^{\mathrm{wf}}_{T}$ (mean $\pm$ 95\% CI); the probability flow
action matching floor SW$^{\mathrm{pf}}_{T}$ ($\dot x = \nabla V$); and
the achieved bond $\chi^\star$.}
\label{tab:scaling}
\begin{tabular}{lrrrr}
\toprule
$d$ & params & SW$^{\mathrm{wf}}_{T}$ (mean $\pm$ 95\% CI) & SW$^{\mathrm{pf}}_{T}$ & $\chi^\star$ \\
\midrule
8 & 15k & 0.0363 $\pm$ 0.0027 & 0.0299 & 5--20 \\
12 & 23k & 0.0341 $\pm$ 0.0026 & 0.0293 & 6--31 \\
16 & 31k & 0.0380 $\pm$ 0.0020 & 0.0291 & 6--19 \\
32 & 64k & 0.0306 $\pm$ 0.0009 & 0.0250 & 7--11 \\
\bottomrule
\end{tabular}
\end{table}

\FloatBarrier

\subsection{Rare-event sampling and idealized amplification scaling}
\label{sec:results-rare}
Rare-event sampling is governed by the mass a distribution places many
standard deviations from its modes. We define a rare event by the standard
multivariate $k$-sigma rule: a point lies
in the tail of its nearest mode $c_j$ when $\lVert x - c_j \rVert > k\sigma$,
so that for well-separated orthogonal modes the tail probability is
$\prare(k) = \Pr[\chi^2_d > k^2]$, the upper tail of a $\chi^2_d$ variate (here
$d{=}8$, $k{=}4$, so $\prare \approx 4.24\%$). In Figure~\ref{fig:rare-event}(b,c) we embed Born samples from the
$K{=}160$ prepared state (Section~\ref{sec:results-bypass}) under a shared
t-SNE map.
The flow reproduces all $2d{=}16$ modes and populates the $>\!4\sigma$ shells around each mode, so rare-event mass survives the tensor network
evolution.

Drawing rare samples from this state is accelerated by amplitude
amplification~\cite{bib:bhmt}. Boosting the tail amplitude to
$\mathcal{O}(1)$ costs $\mathcal{O}(1/\sqrt{\prare})$ state preparations per
accepted rare sample, against the $\mathcal{O}(1/\prare)$ draws of classical
rejection sampling (Figure~\ref{fig:rare-event}a). Both pipelines are built
on the same learned potential---the flow ODE pipeline integrates $\nabla V$ of
the oracle that the wavefunction flow consumes---so the learning error is common to
both pipelines. The comparison isolates the transport and the sampling method.
Both are classical simulations; the difference lies in which transport prepares the
samples and how the tail is drawn from them. For the prepared
state, this costs $2.5\times$ fewer preparations per rare sample at $4\sigma$,
rising to $5.9\times$ at $5\sigma$ and $7.8\times$ at $6\sigma$; at a budget
of $500$ preparations, the difference is visible as $42$ against $104$ ringed
tail points in Figure~\ref{fig:rare-event}(b,c).

This is an idealized demonstration: we model the exact Grover
outcome distribution $P(n) = \sin^2((2n{+}1)\theta)$ after $n$ amplification
rounds, with $\prare = \sin^2\theta$
(Methods), evaluated at the rare-event amplitude the prepared state carries,
rather than compiling and executing a circuit-level $U_{\mathrm{prep}}$. The
marking oracle that flags the tail, its (polynomial) cost, and the scope of
the advantage---in particular that a low-rank observable would be classically
contractible---appear in Supplementary Section~\ref{app:rare-oracle}. This
demonstration establishes that the tensor network flow delivers a state on
which amplitude amplification reproduces its idealized cost scaling. It does
not establish a gate-level or
dequantization-resistant advantage on this benchmark, which would require a
controlled preparation circuit and an observable that is not efficiently
contractible.

\begin{figure}[tbp]
\centering
\includegraphics[width=\textwidth]{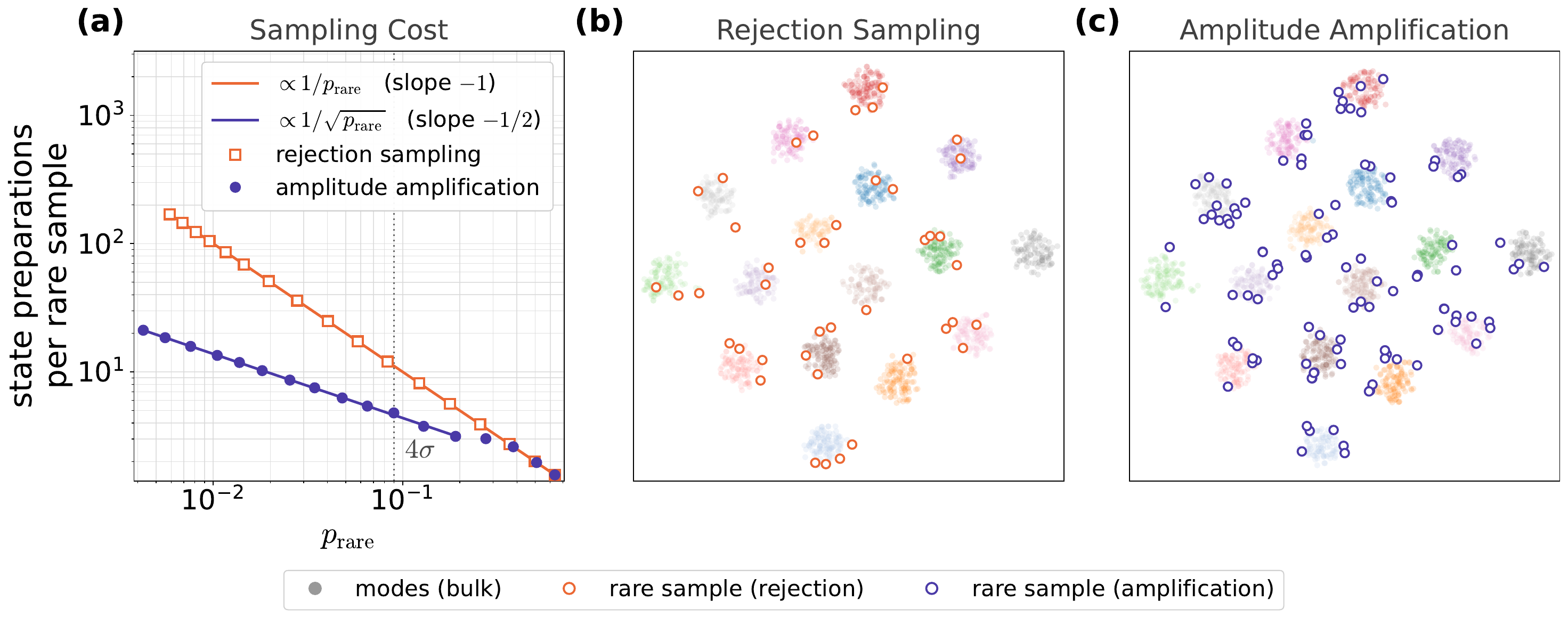}
\caption{\textbf{Rare-event sampling cost and yield on the $d{=}8$ orthogonal
Gaussian mixture ($\sigma{=}0.5$, $>\!4\sigma$, $a\approx4.24\%$).} Both pipelines are driven
by the same learned velocity potential---the flow ODE pipeline
integrates $\dot x = \nabla V$ of the oracle whose cores the wavefunction flow
consumes---so state preparation error is common to both pipelines. The comparison
isolates the transport and the sampling method. Both are classical
simulations. Cost is quoted in state
preparations, i.e.\ complete transports.
(a)~\emph{Sampling cost.} State preparations per accepted rare sample against
the rare-event probability $\prare$, each pipeline plotted at its own measured
tail mass. Markers are measurements and lines are the cost models. Prefactors
are fitted with the exponents held at their
structural values: rejection over all points, amplification on the tail
($\prare\le0.2$, i.e.\ $k=3.5$--$6.5$) where its $1/\sqrt{\prare}$ law is asymptotic;
the fitting protocol and the round-count rule are given in Methods.
(b)~\emph{Rejection sampling.} Drawn from the flow ODE state: at a
budget of $Q{=}500$ state preparations and a $4\sigma$ threshold, it yields
$Qp=42$ rare samples.
(c)~\emph{Amplitude amplification.} Drawn from the wavefunction flow state:
$Q/\mathrm{cost}=104$ rare samples at the same
budget, a factor $2.48$ more, which reproduces the lift of panel~(a) up to
the rounding of integer counts.
Each method is thus shown on the state its own pipeline prepares. The two
panels come from a single t-SNE fit applied to both clouds at once, so
they share a frame, and the ring densities compare directly; color encodes the
nearest mode. Untransported
residual mass---closer to the origin than to any mode---is excluded
throughout, matching the oracle's indicator; it is $3.0\%$ of the
wavefunction flow's mass, against $0.2\%$ for a finite sample of the target
itself. Tail fractions are measured
on $4\times10^{4}$ Born samples per pipeline; the Grover outcome statistics are
analytic (Methods), with no compilation, controlled state preparation, or
tensor network contraction costs included.}
\label{fig:rare-event}
\end{figure}

\section{Discussion}\label{sec:discussion}
We have addressed the exponential cost of classically simulating
wavefunction flows~\cite{bib:layden2025} using tensor
networks to compress, evolve, and represent wavefunctions. We show that this compression
converts that cost into polynomial-in-dimension growth, using the same general-purpose tensor network toolset that quantum computing already relies on for simulating quantum systems and for compiling quantum circuits~\cite{bib:berezutskii2025tnqc}. Concretely, we query a scalar-valued potential pointwise and never tabulate it on the dense $N^d$ grid. That restriction preserves the correspondence with gate-level hardware, which can only access $\Vt$ through oracle queries, because a simulator that precomputes the dense array would solve an easier problem than a circuit implementation faces (Supplementary Proposition~\ref{thm:tci-query}).

We demonstrate scaling on orthogonal Gaussian mixtures, a target family
whose density is low-rank by construction. Quantum-inspired classical
algorithms match a broad range of quantum speedups whenever the input is
low-rank and $\ell^2$-sampleable and the desired output is a classical
sample~\cite{bib:tang2019}. Both the velocity potential and the evolving
state are empirically low-rank on these targets, placing the demonstrated
instances squarely in the dequantizable regime.

Three properties of the targets explain the compression. First, the wavefunction is a smooth lift of $\sqrt{p_t}$, so
it inherits its regularity from the learned potential, and the Trotter
requirement saturates in practice. Second, learned scalar potentials resolve the target's geometry on its natural length scale and are featureless
elsewhere, and cross interpolation finds that structure without external
guidance. Third, the phase operator acts largely within the subspace the
wavefunction already occupies, which TDVP tracks variationally without
certifying it in advance.
The bond dimension $D_V$ of the learned potential is therefore the resource
parameter of the pipeline: it jointly bounds the classical query cost, the
gate count of the precompiled circuit, and the end-to-end preparation cost
(Supplementary Section~\ref{app:a7}, Propositions~\ref{thm:tci-query}--\ref{thm:tci-circuit} and Corollary~\ref{cor:total-cost}).

Our pipeline sits between quantum generative models that learn a target distribution and circuit-level simulations that assume a compiled oracle.
Existing quantum generative models train a circuit or an MPS to approximate a target~\cite{bib:han2018mps, bib:coyle2020born,
bib:zoufal2019qgan}. We instead simulate a wavefunction flow whose Born
density matches the target by construction, up to the grid, Trotter, and
compression errors.
Wavefunction flows are the transport-setting special case of the
generic Schr\"odingerisation program~\cite{bib:jinliuyu2023}, which lifts
classical linear dynamics to Schr\"odinger time-evolution. In contrast, direct
quantum-circuit simulation of the same Trotter ansatz presupposes the
precompiled phase oracle our pipeline produces, and adjacent classical
strategies~\cite{bib:araki2022adaptivepotential} exploit the same target
structure and could complement this pipeline. The pipeline is thus empirically scalable in moderate
dimensions, and produces a representation that a separate compilation procedure can consume under
explicit access and precision assumptions.

Although our tensor network simulations yield an MPS description of the qsample, they do not yet produce a compiled unitary for coherently preparing the state on quantum hardware. Three ranks determine the efficiency of classical simulation: $D_\Psi$ (state), $D_V$ (generator), and $D_O$ (readout observable). Compressing the state with a low-rank MPS pushes the difficulty onto the generator and the readout. When all three ranks stay low, the preparation-and-readout task is classically simulable.
Bond growth is only one indicator of
classical difficulty~\cite{bib:schuch2008entropy,
bib:bravyi2016clifford, bib:jozsa2008matchgates, bib:barahona1982ising}, so any advantage must identify a specific preparation-and-readout task. The task that would qualify here is readout of an observable that is efficiently oracle-accessible yet not efficiently contractible (Supplementary
Section~\ref{app:rare-oracle}). 

Compiling the prepared state is a separate problem. A compact MPS has a depth-linear sequential construction, improving to logarithmic depth with unitary circuits, which is provably optimal~\cite{bib:malz2024logdepth}, and to constant depth with adaptive circuits that use mid-circuit measurement and feedforward, at a constant-factor ancilla overhead~\cite{bib:smith2024constdepth}. Choosing among these is an open question in the wider tensor network landscape~\cite{bib:berezutskii2025tnqc}. We supply the state and generator representations that such an estimate would consume, but we do not perform circuit synthesis and resource estimation here.

As a concrete application, rare-event sampling serves as an end-to-end test of the tensor network pipeline for wavefunction flows. Amplitude amplification requires an indicator fixed in advance. When the rare events are unknown \textit{a priori}, they cannot be flagged. Guo \emph{et al.}~\cite{bib:guo2026rare} give a quantum algorithm for that setting, with optimal dependence on the rarity threshold.
Here the indicator is available because a distance rule around modes defines the tail. Rare-event probability mass survives the tensor network evolution, and on that prepared state amplitude amplification reproduces the idealized $O(1/\sqrt{p_{\mathrm{rare}}})$ scaling, yielding rare samples with $2.5\times$ fewer state preparations per accepted rare sample
than rejection sampling at the same threshold. The tail is the part of the target most easily lost to truncation, so its survival and the measured $O(1/\sqrt{p_{\mathrm{rare}}})$ scaling validate the pipeline that produced the compressed state.

\FloatBarrier
\section{Methods}\label{sec:methods}
We present the algorithms in the order of the contributions: the
representation and the Trotter splitting; the TCI$\to$TDVP pipeline and its
$V$-step compression methods; the $V$-step integrators and their
cost--accuracy tradeoffs; the pre-trained tensor network velocity that removes the runtime TCI construction, with its training; and the experimental protocol. The
variational formalism the integrators discretize is derived in Supplementary
Section~\ref{app:b4-tdvp-overview}.

\subsection{MPS representation and the scaling axis}
We represent each spatial dimension by one MPS site with physical
dimension $N$; the full $d$-dimensional grid has size $N^d$ but the
MPS itself has only $d$ sites of size $N \times D \times D$, so its
total memory is $\mathcal{O}(d N D^2)$ in place of $N^d$. Here we
study scaling primarily in $d$ at moderate $N$
(typically $N \le 32$). Scaling in $N$ is also important but is
not the focus here; we hold $N$ at moderate resolution and report
scaling in $d$.

\subsection{Trotter splitting} The continuity Hamiltonian $\Hcons = i[\Kop, \Vt]$ is split into
its kinetic part $\Kop$ (diagonal in Fourier modes) and a
time-dependent diagonal potential
$\Vt(x)$. Each Trotter step uses the eight-step commutator product
formula of Childs and Wiebe~\cite{bib:childswiebe2013}, as adopted
for wavefunction flows by Layden \emph{et al.}~\cite{bib:layden2025}
(local error
$\mathcal{O}(\Delta t^2)$, global $\mathcal{O}(K^{-1})$). $K$ acts on the dense grid via two
$d$-dimensional fast Fourier transforms (FFTs) per step; in the MPS representation it acts
site-locally because $\Kop$ is a Kronecker sum,
$\Kop = \sum_{j=1}^d I^{\otimes(j-1)} \otimes K_j^{(1)} \otimes
I^{\otimes(d-j)}$, costing
$\mathcal{O}(d N D^2 \log N)$ time (a one-dimensional FFT along the physical leg of
each core) and trivial memory. The non-trivial
ingredient is the V-step $e^{i\beta\Vt}$, which is position-diagonal but, written as an MPO, generally requires high bond dimension.

\subsection{The TCI\texorpdfstring{$\to$}{-to-}TDVP pipeline\label{app:b-pipeline}}
At a conceptual level, the two algorithm families serve complementary functions. \emph{Tensor cross interpolation}~\cite{bib:oseledets2010ttcross,bib:ttcross,bib:savostyanov2014,bib:xfac2025} constructs a low-rank matrix product operator representation of the velocity potential $\Vt$, which is the generator of the V-step rather than the unitary $e^{i\beta\Vt}$ itself. This is achieved by sampling $\Vt$ on a small, adaptively selected subset of grid points using maxvol-style pivot selection and reconstructing from those samples, without forming or evaluating $\Vt$ on the full $N^d$ grid. The \emph{time-dependent variational principle} (TDVP) subsequently evolves the wavefunction MPS through a single Trotter step under this $\Vt$-MPO by projecting the Schrödinger evolution onto the rank-$D$ MPS manifold. The one-site variant, 1TDVP, updates a single tensor at a time and keeps the bond dimension fixed; the two-site variant 2TDVP jointly updates two adjacent tensors and truncates them by singular value decomposition (SVD) back to $D_{\max}$ between them, thereby allowing adaptive bond dimension at the cost of a more expensive sweep. Between the two integrators, 1TDVP offers the better tradeoff among wall-clock time, memory, and accuracy in the regime we test (see Section~\ref{sec:results-cost}); 2TDVP is more flexible, but its adaptive bond growth is beneficial only for the most strongly entangled targets.

The V-step applies the position-diagonal unitary $e^{i\beta\Vt}$ to the
wavefunction MPS. That operator is diagonal in position, but its
bond dimension inflates with $\beta\lVert\Vt\rVert$, because a highly
oscillatory complex function is expensive to represent. The pipeline
never forms that operator. The pipeline works instead with the
\emph{generator} $\Vt$---a
real, smooth scalar field whose bond dimension stays small for the
targets we study (empirically; Supplementary Section~\ref{app:a5}) and is
generically far below that of its
exponential (Supplementary Remark~\ref{rem:tdvp-vs-tci})---and applies the exponential
only implicitly, locally, through a variational integrator. Interpolating
the generator rather than its exponential is the design choice that makes
the V-step tractable.

The pipeline has two stages. \emph{Stage 1} (TCI; Section~\ref{app:b2}).
TCI builds a low-rank, position-diagonal MPO of $\Vt$ by querying it on an
adaptively chosen set of grid points (pivots), never tabulating the dense-grid
$N^d$ field; its sample cost is $\mathcal{O}(d N D_V^2)$, exponentially
less than the $N^d$ a tabulation-first method would require. A trained MPS potential (Section~\ref{app:trained-v}) is already
an MPS and skips this stage, feeding its cores straight to Stage~2.
\emph{Stage 2} (TDVP;
Sections~\ref{app:b4}--\ref{app:se-adaptive}; the variational formalism is derived in Supplementary Section~\ref{app:b4-tdvp-overview}). The
time-dependent variational
principle~\cite{bib:haegeman2011tdvp,bib:haegeman,bib:lubich2014}
integrates $i\partial_s\ket{\Psi}=-\Vt\ket{\Psi}$ over $s\in[0,\beta]$ on
the rank-$D$ MPS manifold, whose time-$\beta$ solution is exactly
$e^{i\beta\Vt}\ket{\Psi}$ (the sign flips for the $e^{-i\beta\Vt}$
factors of the product formula). It sweeps over sites and exponentiates a small
site-local effective Hamiltonian on a Krylov subspace, so the global
unitary $e^{i\beta\Vt}$ is never assembled.

Per Trotter step the flow alternates a \emph{kinetic step}---$e^{i\alpha \Kop}$
applied axis-by-axis via one-dimensional FFTs, cheap and rank-preserving because $\Kop$
is a Kronecker sum---with the \emph{V-step} of Stage~1, repeated for $K$ Trotter
steps of the eight-letter product formula (local error
$\mathcal{O}(\Delta t^2)$, global $\mathcal{O}(K^{-1})$;
Supplementary Eq.~\eqref{eq:trotter8}). Every object touched
is of size $\mathcal{O}(d N D^2)$ (state) or $\mathcal{O}(d N D_V^2)$
(operator construction), never the dense $N^d$ grid.

\subsection{Dense-grid pseudospectral baseline\label{app:b1}}
The dense-grid
baseline applies Supplementary Eq.~\eqref{eq:trotter8} on the full $N^d$
grid: $e^{i\alpha \Kop}$ via two $d$-D FFTs (with a
$(-1)^{i_0+\dots+i_{d-1}}$ sign mask to center the kinetic
spectrum), $e^{i\beta\Vt}$ as pointwise multiplication.
Memory $16\,N^d$\,bytes (complex128); infeasible for $N^d
\gtrsim 10^9$.

\subsection{Tensor cross interpolation\label{app:b2}}
TCI~\cite{bib:oseledets2010ttcross,bib:ttcross,bib:savostyanov2014,bib:xfac2025,bib:ritter2024qtci}
constructs an MPS
approximation of a function $f : \prod_j \{1,\dots,N\} \to
\mathbb{C}$ of $d$ discrete variables without ever materializing
the dense $N^d$ tensor. The algorithm queries $f$ on a small set
of carefully chosen index tuples (\emph{pivots}) and assembles
the MPS from those queries.

\paragraph{Skeleton decomposition}
The MPS format~\cite{bib:oseledets2011} expresses $f$ as
\begin{equation}\label{eq:tt}
f(i_1, \dots, i_d) =
A_1[i_1] A_2[i_2] \cdots A_d[i_d],
\qquad A_j[i_j] \in \mathbb{C}^{D_{j-1}\times D_j},
\end{equation}
with $D_0 = D_d = 1$. TCI builds Eq.~\eqref{eq:tt} site by site
via the \emph{cross approximation}: at each bond, one selects a
set of $D_j$ row indices $I_j \subseteq
\{1,\dots,N\}^j$ and column indices
$J_j \subseteq \{1,\dots,N\}^{d-j}$, and uses the
$|I_j| \times |J_j|$ block of $f$ to build a
\emph{maxvol-optimized} interpolating skeleton
$f \approx C_j P_j^{-1} R_j$ where $P_j = f|_{I_j \times J_j}$.
A sweep in the style of the density matrix renormalization
group (DMRG) iterates between bonds, refining $I_j, J_j$
to maximize the \emph{volume} (absolute determinant) of $P_j$.

\paragraph{Application to the V-step}
The deployed pipeline applies TCI to the \emph{generator} $\Vt(x)$
itself, viewed as a real function $\{1,\dots,N\}^d \to \mathbb{R}$ on
the $N$-point grid per axis, producing a position-diagonal MPO of bond
dimension $D_V$ that TDVP then evolves the wavefunction under
(Sections~\ref{app:b4}--\ref{app:b5}). Crossing the smooth real
generator rather than the oscillatory unitary $e^{i\beta\Vt}$
keeps $D_V$ small: the exponential's rank inflates with
$\beta\lVert\Vt\rVert$, whereas the rank of $\Vt$ stays small in
practice for our targets (Supplementary Section~\ref{app:a5},
Supplementary Remark~\ref{rem:tdvp-vs-tci}).
The MPO is diagonal in the position basis at each site (physical legs
are diagonal), so each tensor $W_j[i_j, i_j']$ is nonzero only for
$i_j = i_j'$; we exploit this to reduce per-site memory by a factor of
$N$. The cross is controlled by the maxvol residual
$\| f - C_j P_j^{-1} R_j \|_\infty$, capped at $D_V$ pivots over a fixed
number of DMRG-like sweeps. We do not benchmark the
\emph{build-then-apply} alternative, which crosses $e^{i\beta\Vt}$
itself into an MPO at the inflated rank $D_V^{\exp} \ge D_V$
(Supplementary Remark~\ref{rem:tdvp-vs-tci}).

\subsection{1TDVP (one-site)\label{app:b4}}
The single-site projector gives a sweep update that, at each
site $j$, integrates two ODEs:
\begin{align}
&\text{Forward: } \partial_t A_j(t) = -i H_{\mathrm{eff},j}^{(1)}
A_j(t), \\
&\text{Backward: } \partial_t C_j(t) = +i H_{\mathrm{eff},j}^{(0)}
C_j(t),
\end{align}
where $H_{\mathrm{eff},j}^{(1)}$ acts on the $j$-th site tensor
$A_j$ contracted against the V-step MPO and the left/right
environments, and $H_{\mathrm{eff},j}^{(0)}$ acts on the bond
matrix $C_j$ between sweeps. Both are exponentiated for
time $\Delta t$ via SciPy's
\texttt{scipy.sparse.linalg.expm\_multiply}, which grows a Krylov
(Lanczos/Arnoldi) subspace adaptively until the matrix-exponential action
converges and is capped at dimension $10$; this local exponentiation is
the dominant cost. The bond dimension is exactly preserved and does not
grow, so 1TDVP is fixed-rank.

\paragraph{Cost of the fixed-rank sweep}
Because the bond dimension does not grow, the per-site Krylov
exponentiation is on an
$\mathcal{O}(N D^2)$-dimensional space; the per-Trotter-step
wall-clock time is $\mathcal{O}(d \cdot N \cdot D^3 \cdot k)$ for
$k$ Krylov iterations. On the Gaussian mixture target, this scaling is
benign: the bond dimension required to represent the
intermediate state stays well below the cap (Supplementary
Section~\ref{app:a5}), so the fixed-$D$ flow does not lose accuracy.

\subsection{2TDVP (two-site)\label{app:b5}}
2TDVP integrates a joint two-site tensor
$A_{j,j+1}(t) = A_j(t) A_{j+1}(t)$ for time $\Delta t$ via the
analogous Krylov exponentiation, then SVD-splits the result
back into single-site cores:
\begin{equation}\label{eq:tdvp2-svd}
A_{j,j+1}(t + \Delta t) = U \Sigma V^\dagger
\quad\Rightarrow\quad
A_j \leftarrow U,\ \ \ \ A_{j+1} \leftarrow \Sigma V^\dagger,
\end{equation}
truncating the singular values to a cap of $D_{\max}$ with
cutoff $\sigma_{D+1} < \mathrm{tol}$ ($\mathrm{tol}=10^{-3}$ throughout
our runs). The bond dimension thus grows
adaptively up to $D_{\max}$, capturing entanglement structure
that 1TDVP cannot. The cost per Trotter step is a factor
of $\sim N D$ larger than that of 1TDVP, with the larger Krylov
subspace and the SVD as the principal additional terms.

\paragraph{Regime of advantage over 1TDVP}
On structured-target problems where the equilibrium bond dimension is
larger than the initial-state rank (for instance, a correlated Gaussian mixture,
where the source is a single Gaussian and the target carries genuine
cross-axis covariance), the adaptive growth of 2TDVP is
essential because 1TDVP would lock at the initial bond dimension. For the
orthogonal Gaussian mixture, the required rank is smaller, and 1TDVP with
demand-driven subspace expansion (Section~\ref{app:se-adaptive}) is
competitive at a much lower cost. This is the empirical finding of
Sections~\ref{sec:results-cost} and~\ref{sec:results-bypass}.

\subsection{Subspace-expansion 1TDVP (1TDVP+SE)\label{app:se-adaptive}}
A pure two-site 2TDVP update grows the bond at every step
($\mathcal{O}(\chi^6)$); a rank-preserving 1TDVP update
($\mathcal{O}(\chi^3)$) cannot admit the perpendicular component that the
dynamics generate and stalls at fixed $\chi$. 1TDVP+SE interpolates
between the two using a subspace
expansion~\cite{bib:white2005,bib:hubig2015,bib:yangwhite2020}:
at each step, for every bond it forms the leading would-be-new singular
value of the locally evolved two-site block $C_2 + \beta\, H C_2$,
normalizes it by the dominant value, and grows the bond by one singular value
(in that perpendicular direction) iff the ratio exceeds a threshold
(default $10^{-2}$); otherwise 1TDVP evolves on the current manifold.
The bond dimension thus tracks the dynamics' demand without a full
two-site sweep at every step. On the orthogonal Gaussian mixture with the trained
MPS potential as oracle ($d{=}8$, $N{=}32$, $K \in \{40,80,120,160\}$), pure 2TDVP
self-limits its bond by truncation ($\chi^\star\!\approx\!16$,
Figure~\ref{fig:gmm-bypass}c) and reaches a lower final deviation from
the probability flow than 1TDVP+SE (Figure~\ref{fig:gmm-bypass}a),
while 1TDVP+SE is cheaper per step. This is a cost--accuracy frontier with
a modest ($\sim\!2$--$7\times$) wall-clock gap
(Figure~\ref{fig:gmm-bypass}d).

\subsection{Targets and oracles}
For Gaussian mixture targets, the optimal velocity admits a
closed-form expression via Tweedie's
identity~\cite{bib:efron2011} (Supplementary Section~\ref{app:a4}); we use this
analytic $\Vt$ to isolate tensor network compression error
from velocity learning error. For the Swiss roll target, we train a JAM scalar-potential multilayer
perceptron (MLP) oracle ($\Vt$) and then feed it into both the JAM classical baseline (via
$\nabla \Vt$) and the wavefunction flow Trotter step. For the trained MPS $V$-step study (Section~\ref{sec:results-bypass}), we instead train the
velocity potential directly as a real-valued MPS (the trained MPS potential)
whose cores feed the V-step with no runtime TCI; this learned oracle is described in Section~\ref{app:trained-v}.

\subsection{Pipeline-native training of \texorpdfstring{$\Vt$}{Vt}}
Training $\Vt$ from JAM and then plugging it into the Trotter
pipeline (the JAM-train-then-plug-in procedure just described) is correct
when the pipeline is run at the same fidelity assumed by the JAM
loss: the dense $N^d$ grid, with integration substeps fine enough
that Trotter error is negligible. Whenever the deployed pipeline
truncates the wavefunction MPS to a fixed bond cap or coarsens
the time discretization, the JAM-trained $\Vt$ no longer
minimizes sample quality loss for that pipeline.

\subsection{Pre-trained velocity-potential oracle and the trained MPS \texorpdfstring{$V$}{V}-step\label{app:trained-v}}
When $\Vt$ is itself an MPS with real-valued cores (a trained MPS potential;
Section~\ref{app:trained-v}), the unitary
$e^{\mathrm{i}\beta\Vt}$ is diagonal-unitary by construction and its
action on the wavefunction MPS is computed directly from the
$\Vt$ cores, with no TCI compression of
$e^{\mathrm{i}\beta\Vt}$ as an MPO. This removes the operator construction cost that dominates the runtime-TCI route. We
benchmark two integrators under the trained cores: 2TDVP, which drives
the scaling runs of Table~\ref{tab:scaling} together with a cold-start
bond-budget ramp on the wavefunction MPS (Section~\ref{app:b7-ramp}),
and the cheaper 1TDVP+SE, which grows the wavefunction bond on demand
and so keeps $\chi$ small (Section~\ref{app:se-adaptive}).

The deployed TCI+TDVP route (Section~\ref{app:b2}) rebuilds a fresh TCI approximation of the \emph{generator} $\Vt$ at every Trotter substep and lets
TDVP apply its exponential implicitly; a coarser \emph{build-then-apply}
alternative instead compresses $e^{\mathrm{i}\beta\Vt}$ itself into an
MPO. Either way, an operator construction step runs at every substep.
When the velocity-potential oracle is itself parameterized as a
real-valued MPS $\Vt(x) = \langle x | A_0 A_1 \cdots A_{d-1}
\rangle$ with $A_j \in \mathbb{R}^{D \times N \times D}$, even the
per-substep TCI build is unnecessary: its cores are handed to the
V-step directly. The unitary
$e^{\mathrm{i}\beta\Vt}$ is diagonal in the position basis and
exactly unitary by construction (no $|z(x)| = 1$ violation
to penalize during training). 2TDVP on the wavefunction
MPS needs only the action of $e^{\mathrm{i}\beta\Vt}$ on the
joint two-site tensor $A_{j,j+1}$, which is computable directly
from the $\Vt$ cores: the effective Hamiltonian on the
two-site block is itself block-diagonal in $(\sigma_j,
\sigma_{j+1})$, with each block a $D^2 \times D^2$ matrix built
from contractions of the $\Vt$ cores against the wavefunction's
left/right boundary tensors. Krylov-Lanczos exponentiation on
these blocks gives the two-site update in
$\mathcal{O}(d N D^3 \chi_\Psi^2)$ per V-substep, independent of
the bond dimension of $e^{\mathrm{i}\beta\Vt}$ as an MPO.

\paragraph{Eliminated operator construction cost}
The trained MPS $V$-step consumes the $\Vt$ cores and never
forms the operator as an MPO, removing the per-substep
operator construction cost of both the runtime-TCI and
build-then-apply routes. All wavefunction flow results with the trained MPS potential (Section~\ref{sec:results-bypass}, Figure~\ref{fig:gmm-bypass})
use this V-step.

\paragraph{Trained MPS potential (amortized time site)}
We parameterize $\Vt$ as a $(d{+}1)$-site real-valued MPS
(\emph{TimeSite}): sites $0,\ldots,d{-}1$ index the spatial coordinates
with local dimension $N$, and a $d$-th \emph{time site} of local dimension
$N_t{=}16$ encodes the trajectory parameter $t \in [0,1]$, with continuous
$t$ (and continuous $x$) handled by multilinear interpolation. The cores
are real (initialized as a perturbed identity), so
$e^{\mathrm{i}\beta\Vt(x,t)}$ is diagonal and exactly unitary by
construction, with no $|z|{=}1$ soft penalty, in contrast to a complex-MPS
factor model. The trained MPS potential ansatz uses one native site per spatial
axis. Training is by JAM action matching (Supplementary Eq.~\eqref{eq:jam-loss}; Adam at
$10^{-3}$, batch $256$, single-cosine schedule with patience, the bond cap
itself supplying regularization). At inference,
\texttt{get\_mps\_cores}$(t)$ returns the $d$ spatial cores at the
requested $t$ by contracting the time site against the multilinear
$t$-basis and folding the result into the last spatial core. The
oracle is trained in a frame centered on the box origin, whereas the
world grid on which the wavefunction lives places the Gaussian source at the
box center $L/2$. Before the cores enter the V-step we therefore align the
two frames by a \emph{frame roll}: each spatial core's position leg is
cyclically shifted by $N/2$ sites along that axis,
$i \mapsto (i + N/2)\bmod N$ (a periodic \texttt{roll}), so that $\Vt$ is
evaluated at the physical coordinates the state occupies.

\subsection{Action-matching (JAM) training\label{app:b6}}
We use JAM (Supplementary Eq.~\eqref{eq:jam-loss}) to learn the potential
$\Vt$ on the Swiss roll target, and as the MLP baseline for the trained MPS potential ans\"atze on the Gaussian mixture (Section~\ref{app:trained-v}).

\paragraph{Network architecture}
We parameterize
$\Vt(x; \theta) = \mathrm{MLP}_\theta\!\bigl([\phi(x), \psi(t)]
\bigr)$ where
$\phi(x) = [\sin(2\pi x/L), \cos(2\pi x/L)] \in \mathbb{R}^{2d}$
is a periodic position encoding (matching the periodic boundary
of the wavefunction grid) and $\psi(t)$ is a 32-dimensional
sinusoidal time embedding. The MLP has 3 hidden layers of width
128 with SiLU activations and a scalar output. We compute
$\nabla \Vt$ via PyTorch autograd through the network output.

\paragraph{Training procedure}
We sample $(x_0, x_1) \sim p_0 \otimes \hat p_1$, where
$\hat p_1$ is the empirical training set (50000 samples for
Gaussian mixture/Swiss roll), then sample
$t \sim \mathcal{U}[0, 1]$ and form $x_t = (1-t) x_0 + t x_1$.
Optimization uses Adam~\cite{bib:kingma2015adam} at learning rate
$10^{-3}$, batch size
256, $2 \times 10^4$ iterations, with no learning-rate
schedule. Validation is on a held-out
$10000$-sample split.

\paragraph{Use as velocity-potential oracle}
The trained $\Vt(\cdot; \theta)$ is then evaluated on the
position grid for two purposes: (i)~as the gradient flow of
the JAM classical baseline ($\dot{x}_t = \nabla
\Vt(x_t; \theta)$); (ii)~as the velocity-potential oracle plugged into
the wavefunction flow Trotter step in place of the Gaussian mixture
analytic potential. Because both consumers see the same $\Vt$, the
comparison controls for the velocity-potential oracle: the two pipelines
then differ only in how they transport it---the classical ODE
versus the Trotterized, MPS-compressed unitary flow---so the residual
gap isolates the combined tensor network contribution (V-step
compression together with the grid, Trotter, and Born-sampling error of the
wavefunction pipeline), not oracle-learning error. Throughout, \emph{JAM}
denotes this classical ODE baseline; we distinguish its
velocity source where it matters (the analytic Gaussian mixture velocity, the trained
MLP, or the trained MPS potential).

\subsection{Hyperparameter search protocol\label{app:b7}}
We Sobol-sample the hyperparameter box
$(N, K, D_{\max}) \in [N_{\min}, N_{\max}] \times [K_{\min},
K_{\max}] \times [D_{\min}, D_{\max}]$ separately at each
$d$. The exact box per $d$ appears in Section~\ref{app:b7}.
At every cell, we record SW, MMD, $\chi_{\max}$, and total
wall-clock time. We run every cell as a Modal cloud function on a
single-CPU container with 16--64\,GB memory and incremental
checkpointing.

For each $d$ we Sobol-sample
$(N, K, D_{\max})$ over the box specified in
Table~\ref{tab:search-grid}. At $d \le 4$, the box extends to
$N=128$ or $N=256$; at $d \ge 5$, container memory limits
$N \le 32$. Each cell runs $1$--$3$ seeds. Cells whose
$N^d > 10^9$ would exceed the dense-grid path's memory; for those,
we run only the MPS methods.

\begin{table}[!htbp]
\centering
\caption{\textbf{Hyperparameter search-grid maxima per $d$.}
The $N$ cap is set by container memory (dense-grid path);
the $D$ cap is the largest bond dimension we Sobol-sampled.}
\label{tab:search-grid}
\begin{tabular}{rccc}
\toprule
$d$ & $N_{\max}$ & $K_{\max}$ & $D_{\max}$ \\
\midrule
2 & 256 & 64 & 128 \\
3 &  64 & 64 &  64 \\
4 &  32 & 64 &  64 \\
5 &  32 & 64 &  32 \\
6 &  32 & 64 &  32 \\
7 &  32 & 64 &  32 \\
8 &  32 & 64 &  32 \\
\bottomrule
\end{tabular}
\end{table}

\subsection{Bond-dimension ramp schedule\label{app:b7-ramp}}
For the higher-dimensional trained MPS potential experiments, we optionally replace
the fixed wavefunction bond cap $D_{\max}$ with a stepwise schedule
$D_{\max}(k)$ that grows with the Trotter-step index $k$. The
source state $\Psi_0$ is a single-mode Gaussian on the position
grid and has $\chi = 1$; setting $D_{\max} = 64$ from $k = 0$
forces 2TDVP's SVD step to truncate a near-empty singular-value
spectrum at every Trotter step before the dynamics have grown enough
entanglement to populate it; the resulting near-degenerate singular
values make the SVD split ill-conditioned and destabilize subsequent
Trotter steps. Ramping
$D_{\max}$ in step with the natural growth of $\chi(\Psi_k)$
keeps each SVD full-rank.

\paragraph{Schedule format}
A schedule is a sorted list of $(k_{\mathrm{thresh}}, D_{\max})$
pairs; at Trotter step $k$ the current cap is the $D_{\max}$ whose
threshold is the largest $k_{\mathrm{thresh}} \leq k$. For the
orthogonal Gaussian mixture scaling runs of Table~\ref{tab:scaling}
($d \in \{8,12,16,32\}$, $K{=}160$) the cap is ramped from a small
initial value ($\chi_0 = 1$, or $\chi_0 = 3$ when the operating-point
phase swing $\beta_L = (\pi N/2L)\sqrt{d/K}$ is large enough to make a
$\chi{=}1$ start unstable) up to its final value over roughly the first
$112$ Trotter steps, then held. The exact per-$d$ schedule is recorded
with the released run configurations.

\subsection{Sample quality evaluation metrics\label{app:b8}}
Samples are drawn from the evolved wavefunction MPS by direct
conditional (perfect) sampling~\cite{bib:ferrisvidal2012}, site by
site, with no Markov chain. We use the sliced Wasserstein (SW) distance with $128$ random
projections~\cite{bib:bonneel2015} and the radial-basis-function
maximum mean discrepancy (MMD)~\cite{bib:gretton2012}; the
target--target reference of Table~\ref{tab:sw} is defined in
Section~\ref{app:b8}.

The sliced Wasserstein distance~\cite{bib:bonneel2015}
between sample sets $\{x^{(i)}\}, \{y^{(i)}\}$ projects each
sample onto $L=128$ random unit directions $\theta_\ell$ and
averages the one-dimensional Wasserstein-1 distances:
\begin{equation}\label{eq:sw}
\mathrm{SW}(\{x\}, \{y\}) = \frac{1}{L} \sum_{\ell=1}^L
W_1\!\left(\theta_\ell^\top x, \theta_\ell^\top y\right).
\end{equation}
The MMD~\cite{bib:gretton2012} uses an isotropic Gaussian
kernel; we follow standard implementations.

The target--target finite-sample reference (labeled \emph{Exact} in the
figure panels) is
$\mathrm{SW}(\{x_a\}, \{x_b\})$ for two independent draws of
size $n=500$ from the same target distribution, averaged over
$32$ pairs (independent random projections per pair). This reference is a
finite-sample scale set by $n$ and the projection count, not an
irreducible error floor. For
the Gaussian mixture, the reference decreases mildly with $d$: at our
operating point ($n=500$, $L=128$) it ranges from
$0.069$ at $d=2$ to $0.060$ at $d=8$. We compute the reference once and
report it as the target--target row of
Table~\ref{tab:sw}. The method rows in that table are evaluated at the same $n=500$
and the same $128$ projections, so the comparison is like-for-like. The
reference is regenerated by \texttt{scripts/compute\_exact\_sw\_floor.py}
in the code release.

\subsection{Rare-event sampling}
A rare event uses the nearest-mode $k$-sigma rule, with analytic tail
probability $\prare(k) = \Pr[\chi^2_d > k^2]$; Born samples closer to the domain
center than to any mode (untransported source mass) are excluded before forming
the tail fraction. Amplitude amplification chooses the Grover power
$n$ to \emph{minimize the expected cost} $(2n{+}1)/P(n)$ with
$P(n) = \sin^2((2n{+}1)\theta)$ and $\prare = \sin^2\theta$, rather than to
maximize the single-shot success probability. The latter is the textbook rule
but the wrong objective when a failed trial can be repeated: it costs
$11\%$ more on average here and makes the cost curve step for no benefit.
Writing $m = (2n{+}1)\theta$, the minimization gives $\tan m = 2m$, whose
first positive root $m^\ast = 1.1656$ yields an asymptotic cost of
$m^\ast/\sin^2 m^\ast = 1.380$ preparations per $\sqrt{\prare}$; the familiar
$\pi/2$ belongs to the maximize-$P$ rule. Since $n{=}0$ is always available at
cost $1/\prare$ (plain rejection sampling on the prepared state), amplification is
never dearer than the flow ODE pipeline on the same state. Cost is quoted in state
preparations, $(2n{+}1)$ per trial, not in oracle calls; the clean $-1/2$ power
law holds in that unit and not in the natural query conventions. Classical
rejection costs one preparation at success probability $\prare$. Rejection
therefore costs exactly $1/\prare$, so the line drawn through its markers in
Figure~\ref{fig:rare-event}(a) is an identity rather than a fit; for
amplification the fitted prefactor is a genuine test of the model, and it
passes, returning $1.383$ against the theoretical $1.380$. That fit is taken
over $\prare \le 0.2$ because the $1/\sqrt{\prare}$ law is asymptotic: beyond it the
optimal round count reaches zero, the cost flattens onto the
one-preparation floor, and the two curves meet. We simulate the
exact Grover outcome statistics rather than those of a compiled circuit, with $\prare$ taken
from the prepared state's own Born-sample tail fraction. The flow ODE pipeline
integrates $\dot x = \nabla V$ for the same learned potential with $640$
substeps, so the oracle error is common to both pipelines.

\subsection{Use of large language models}
During the preparation of this work, the authors used Claude Opus 4.7, 4.8, and 5 and
Claude Code (Anthropic) to improve language and readability, and as coding
assistants in developing the research software. All AI-assisted code was written
under the authors' direction and reviewed and tested by the authors before use in any
reported experiment. Large language models were not used to design the study,
derive the theory, or interpret the results.
After using these tools, the authors reviewed and edited the content as needed and
take full responsibility for the published article.

\FloatBarrier
\FloatBarrier
\backmatter

\section*{Data availability}
All study data are archived on Zenodo at \url{https://doi.org/10.5281/zenodo.21845216}.

\section*{Code availability}
All source code files for this paper are available at
\url{https://github.com/sygaldry-tech/tnwf-paper}.

\section*{Acknowledgments}
This study received no external funding. We thank Modal Labs, Inc.\ for in-kind
compute credits provided through its startup program, which supported the
numerical experiments; Modal Labs had no role in the design, execution, analysis,
or reporting of the study.

\section*{Author contributions}
NXK: Conceptualization, Methodology, Software, Validation, Formal analysis,
Investigation, Data curation, Visualization, Writing -- original draft.
LAW: Conceptualization, Writing -- review \& editing.
SC: Conceptualization, Software.
CR: Supervision, Funding acquisition, Writing -- review \& editing.
SV: Conceptualization, Formal analysis, Visualization, Project administration,
Writing -- review \& editing.
MJK: Conceptualization, Resources, Supervision, Project administration,
Writing -- review \& editing.
All authors read and approved the final manuscript.

\section*{Competing interests}
All authors are employees of Sygaldry Technologies, Inc., which develops
algorithms and hardware for quantum-accelerated artificial intelligence. Some
authors hold equity or equity-based compensation in the company, and C.R.\ and
M.J.K.\ are co-founders.

%
\clearpage
\setcounter{section}{0}
\setcounter{figure}{0}
\setcounter{table}{0}
\setcounter{equation}{0}
\renewcommand{\thesection}{S\arabic{section}}
\renewcommand{\thesubsection}{S\arabic{section}.\arabic{subsection}}
\renewcommand{\thefigure}{S\arabic{figure}}
\renewcommand{\thetable}{S\arabic{table}}
\renewcommand{\theequation}{S\arabic{equation}}

\begingroup
\centering
{\LARGE\bfseries Supplementary Information\par}
\vspace{0.7em}
{\large for ``Scalable quantum simulation of continuous-time generative models\\
via tensor networks''\par}
\par\endgroup
\vspace{1.2em}
\noindent\rule{\textwidth}{0.4pt}
\vspace{0.9em}

\noindent{\small This appendix is the Supplementary Information of the journal
version, included here so that the arXiv posting is self-contained. Section,
figure, table and equation numbers are unchanged from that version.}
\bigskip


\section{Theory}\label{app:theory}

Here we follow the same argument as in the main text.
In Section~\ref{app:a1}, we set up the wavefunction flow construction and
the qsample / amplitude estimation quadratic advantage that motivates
the work. In Sections~\ref{app:a5} and~\ref{app:a7} we identify the resource
behind the acceleration, namely the bond dimension of the
velocity potential and the resulting query- and gate complexity bounds, set
against the dense $N^d$ cost of Section~\ref{app:a6}. In between, we collect the
supporting derivations: the Trotter product formula
(Section~\ref{app:a2}), action matching (Section~\ref{app:a3}), and
the analytic Gaussian mixture velocity (Section~\ref{app:a4}).

\subsection{Wavefunction flows fundamentals}\label{app:a1}
In this section, we collect the construction of wavefunction flows
from~\cite{bib:layden2025} in the form needed by the body of the
paper and connect it to amplitude estimation primitives.

\paragraph{Continuity equation and conservative velocities}
Let $p_0, p_1$ be probability densities on a periodic spatial
domain $[0,L)^d$. The transport problem is to find a time-dependent
velocity field $v_t : [0,L)^d \to \mathbb{R}^d$ and a marginal
density $p_t$ such that
\begin{equation}\label{eq:continuity}
\partial_t p_t + \nabla \cdot (p_t \, v_t) = 0,
\qquad p_{t=0} = p_0, \quad p_{t=T} = p_1.
\end{equation}
A \emph{stochastic interpolant}~\cite{bib:interpolants} of the form
$x_t = a(t) x_0 + b(t) x_1$ with boundary values
$a(0)=1, a(T)=0, b(0)=0, b(T)=1$ produces an
explicit family of $p_t$. (We reserve $\alpha, \beta$ for the
Trotter substep angles of Section~\ref{app:a2}.) The associated
\emph{conditional mean velocity}
\begin{equation}\label{eq:cond-mean}
v_t(x) = \mathbb{E}\bigl[\dot{a}(t)\, x_0
   + \dot{b}(t)\, x_1 \;\bigm|\; x_t = x\bigr],
\end{equation}
satisfies Eq.~\eqref{eq:continuity} and transports a sample
$x_0 \sim p_0$ to a sample $x_T \sim p_1$ along the
ODE $\dot{x}_t = v_t(x_t)$. We focus on
\emph{conservative} parameterizations $v_t = \nabla \Vt$, which
admit a quantum lift. In Section~\ref{app:a3} we describe the variational
characterization of the optimal $\Vt$.

\paragraph{Lift to a wavefunction}
Discretize $[0,L)^d$ on a uniform grid of $N^d$ points
$\{x_k\}$ with cell volume $\Delta x^d = (L/N)^d$, and write a complex
amplitude $\psit \in \mathbb{C}^{N^d}$ on this grid, normalized as a
discrete state, $\sum_k |\psit(x_k)|^2 = 1$. A continuum density $p$ maps
to grid amplitudes by $\Psi(x_k) = \sqrt{p(x_k)\,\Delta x^d}$ (so that
$\sum_k|\Psi(x_k)|^2 \approx \int p\,dx = 1$); the cell-volume factor
$\Delta x^d$ ensures that the discrete Born weights approximate the
continuum probabilities, and expectations $\langle\Psi|F|\Psi\rangle$ are
correct without further rescaling. The wavefunction flow
construction~\cite{bib:layden2025} promotes the conservative
velocity field $v_t = \nabla \Vt$ to a unitary evolution
\begin{equation}\label{eq:psi-evolution}
i \partial_t \psit = \Hcons \psit,
\qquad \Hcons = i[\Kop, \Vt],
\end{equation}
where $\Kop$ is the discrete kinetic operator (with periodic boundary
conditions, diagonalized by the discrete Fourier transform (DFT)
$\mathcal{F}$ as $\Kop = \mathcal{F}^\dagger \, \tfrac{1}{2}|p|^2 \,
\mathcal{F}$ on the grid momenta $p$; the Fourier pseudospectral
discretization of Refs.~\cite{bib:layden2025,bib:childs2022realspace}), and
$\Vt = \mathrm{diag}(\Vt(x_k))$ is the position-diagonal potential.
The commutator structure $i[\Kop,\Vt]$ is anti-Hermitian times $i$,
hence Hermitian, so $\Hcons$ generates a unitary; because $\Vt$
(and hence $\Hcons$) depends on $t$, this is the time-ordered
evolution $U(T) = \mathcal{T}\exp(-i\int_0^T \Hcons\,dt)$
(the frozen-coefficient $e^{-iT\Hcons}$ only when $\Hcons$ is held
fixed over a step). This $\Hcons$ is the conservative form of
the \emph{continuity Hamiltonian} of Ref.~\cite{bib:layden2025},
the discretization of
$\hat{H}_t = \tfrac{1}{2}[\hat{p} \cdot v_t(\hat{x}) +
v_t(\hat{x}) \cdot \hat{p}]$ for general velocity fields.

\paragraph{Comparison with the Madelung formulation}
A different connection between Schrödinger evolution and the
continuity equation has been known since Madelung: if $\Psi_t$
evolves under a canonical kinetic-plus-potential Hamiltonian
$\widehat{K} + V_t(\widehat{x})$, then $|\Psi_t|^2$ obeys a
continuity equation, but with a velocity field that depends on
the phase of $\Psi_t$. The phase, in turn, evolves through the
density via a quantum-potential term, so the two equations are
coupled. The wavefunction flow Hamiltonian
$\Hcons = i[\Kop, \Vt]$ decouples them by construction: the
velocity field is specified directly by the learned $\Vt$, the
phase $\arg\Psi_t$ remains trivial along the trajectory, and the
density marginal satisfies the continuity
equation~\eqref{eq:psi-continuity} without a quantum-potential
correction. The price is that $\Hcons$ is not of canonical
kinetic-plus-potential form and would not arise in nature.
This decoupling renders the V-step a
position-diagonal phase rather than a self-coupled state evolution,
and hence amenable to the MPS-compression treatment of
the rest of the paper.

\paragraph{Correctness of the transport}
A direct derivation~\cite{bib:layden2025} shows that, in the
continuum, if the initial wavefunction is chosen with phase
$\arg \Psi_0(x) = 0$ and amplitude $|\Psi_0(x)|^2 = p_0(x)$,
then the marginal probability of the evolved wavefunction
satisfies
\begin{equation}\label{eq:psi-continuity}
\partial_t |\psit|^2 = -\nabla \cdot (|\psit|^2 \, v_t),
\qquad v_t = \nabla \Vt,
\end{equation}
matching Eq.~\eqref{eq:continuity} as an exact identity for the
$\Hcons$-evolution. The digital simulator then incurs
spatial-discretization and Trotter/product-formula errors, so
$|\Psi_T(x)|^2 = p_1(x)$ up to these errors.

\paragraph{The qsample and amplitude estimation}
For any bounded function $f : [0,L)^d \to \mathbb{R}$ (rescaled to
$[0,1]$, as amplitude estimation requires), the expectation
$\mathbb{E}_{x \sim p_1}[f(x)] = \langle \Psi_T | F |
\Psi_T \rangle$ where $F = \mathrm{diag}(f(x_k))$ is its
position-diagonal embedding. Given a unitary
$U_{\mathrm{prep}}$ that prepares $|\Psi_T\rangle$ from a
fiducial state, together with an oracle that rotates an ancilla by
$f$ (so that $\langle\Psi_T|F|\Psi_T\rangle$ becomes an amplitude),
amplitude
estimation~\cite{bib:bhmt,bib:montanaro2015} returns an
$\epsilon$-accurate estimate of
$\langle \Psi_T | F | \Psi_T \rangle$ using
$\mathcal{O}(\epsilon^{-1})$ queries to $U_{\mathrm{prep}}$ and that oracle,
versus the $\mathcal{O}(\epsilon^{-2})$ classical samples a
particle ensemble of size $\mathcal{O}(\epsilon^{-2})$ would
require. This quadratic sampling advantage is the principal
intended application of wavefunction flows. We neither compile amplitude
estimation to the circuit level nor demonstrate it here. What we do characterize
is the cost of compiling $U(T)$ at the dimensions where the advantage becomes
interesting, together with the complementary primitive the prepared state does
support---amplitude amplification for drawing rare samples rather than
estimating their probability (Section~\ref{sec:results-rare}, Figure~\ref{fig:rare-event}).

\subsection{Continuity Hamiltonian and Trotter
splitting}\label{app:a2}
The continuity Hamiltonian $\Hcons = i[\Kop,\Vt]$ does not split
into a sum of locally simulable terms directly. Its commutator
structure does admit an exact product formula in $K$ and
$\Vt$ separately, both of which split locally:
$\Kop$ is separable in coordinates and exponentiates via one-dimensional FFTs, and
$\Vt$ is diagonal in position.

\paragraph{Eight-letter product formula}
The starting point is the group-commutator product formula of
Childs and Wiebe~\cite{bib:childswiebe2013}, the Baker--Campbell--Hausdorff (BCH)-like identity
$e^{-i \Delta t \, i[A,B]} = e^{i\beta B} e^{i\alpha A}
e^{-i\beta B} e^{-i\alpha A} \cdot
e^{-i\beta B} e^{-i\alpha A} e^{i\beta B} e^{i\alpha A}
+ \mathcal{O}(\Delta t^2)$ for any operators $A, B$ with
$\alpha\beta = \Delta t / 2$. The right-hand side is a group commutator
times its sign-flipped copy, $C(\beta,\alpha)\,C(-\beta,-\alpha)$ with
$C(\beta,\alpha) = e^{i\beta B} e^{i\alpha A} e^{-i\beta B} e^{-i\alpha A}$.
Because the substep angles scale as $\alpha,\beta \sim \sqrt{\Delta t}$
(fixed later in this section), the corrections are graded by total degree in
$(\alpha,\beta)$, with $\alpha\beta \sim \Delta t$. The leading order of
the product is $\Delta t \cdot [A,B]$. The degree-three corrections
($\alpha^2\beta$, $\alpha\beta^2 \sim \Delta t^{3/2}$) cancel by the
symmetric $C(\beta,\alpha)\,C(-\beta,-\alpha)$ structure, and the
surviving degree-four terms give local error $\mathcal{O}(\Delta t^2)$,
consistent with Eq.~(38) of Ref.~\cite{bib:layden2025}. Specializing to
$A = \Kop, B = \Vt$:
\begin{equation}\label{eq:trotter8}
W(\alpha,\beta) = e^{i\beta\Vt} e^{i\alpha K} e^{-i\beta\Vt}
e^{-i\alpha K} e^{-i\beta\Vt} e^{-i\alpha K} e^{i\beta\Vt}
e^{i\alpha K},
\end{equation}
which approximates $e^{-i\Delta t \, \Hcons}$ to
$\mathcal{O}(\Delta t^2)$ per step, hence global error
$\mathcal{O}(K^{-1})$ over $K = T/\Delta t$ steps. Reaching the
$\mathcal{O}(K^{-2})$ of an ordinary symmetric Trotter formula would
require a higher-order commutator construction; the group-commutator
structure forced by the $i[\Kop,\Vt]$ generator caps this formula at first
order.

\paragraph{Choosing \texorpdfstring{$\alpha$}{alpha} and \texorpdfstring{$\beta$}{beta}}
Setting $\alpha = (L / (\pi N)) \sqrt{\Delta t / d}$ and
$\beta = (\pi N / (2L)) \sqrt{d \Delta t}$ ensures
$\alpha \beta = \Delta t / 2$ and balances the kinetic and
potential phase scales: $\alpha \Kop$ has $\mathcal{O}(\sqrt{\Delta t})$
spectral spread on the discrete grid, and $\beta \Vt$ has the
matching $\mathcal{O}(\sqrt{\Delta t})$ phase swing. With this
balance, the product formula is well-conditioned across the
parameter ranges we test.

\paragraph{Cost decomposition per Trotter step}
\begin{itemize}
  \item \textbf{Kinetic step} $e^{i\alpha \Kop}$: $\Kop$ is the Kronecker
    sum of single-site kinetic operators,
    $\Kop = \sum_{j=1}^d I^{\otimes(j-1)} \otimes K_j^{(1)} \otimes
    I^{\otimes(d-j)}$, each
    diagonalized by the one-dimensional DFT, so $e^{i\alpha K} = \bigotimes_j
    e^{i\alpha K_j^{(1)}}$ acts site-locally.
    Cost: $\mathcal{O}(d N D^2 \log N)$ on an MPS, $\mathcal{O}(N^d \log N)$ on
    the dense grid.
  \item \textbf{Position-diagonal V-step} $e^{i\beta \Vt(x)}$: the
    operator is diagonal in the position basis but
    generically of high bond dimension as an MPO, because $\Vt(x)$
    depends jointly on all coordinates. This is the bottleneck we
    address in the rest of the paper. The V-step compression
    strategies (the Methods) all target
    this single subroutine.
\end{itemize}
Per Trotter step there are four $K$ steps and four $\Vt$ steps
(Eq.~\eqref{eq:trotter8}). We report wall-clock time per
Trotter step, i.e.\ per application of $W$.

\paragraph{Rank-preservation under tensor-product operators}
The kinetic step is benign for an MPS representation in a stronger
sense than just site-locality: it preserves bond dimensions
exactly. Let
$U = U^{(1)} \otimes U^{(2)} \otimes \cdots \otimes U^{(d)}$ with
each $U^{(j)} \in \mathbb{C}^{N \times N}$, and let $\Psi$ be an
MPS with cores $\{G^{(j)}\}$ and bond dimensions $\{r_j\}$. Then $U\Psi$
admits an MPS representation with the \emph{same} bond dimensions
$\{r_j\}$, obtained by replacing each core in place,
\begin{equation}\label{eq:rank-preserving-update}
\widetilde{G}^{(j)}[:,\,i_j,\,:] = \sum_{m=1}^N
U^{(j)}_{i_j m}\, G^{(j)}[:,\,m,\,:]\,,
\end{equation}
because $U^{(j)}$ acts only on the physical leg of core $j$ and
leaves the bond legs untouched. Applied to
$U = e^{i\alpha K} = \bigotimes_j e^{i\alpha K_j^{(1)}}$, this is
the formal reason the kinetic Trotter substep is memory-benign
on every V-step compression strategy in
the Methods, regardless of whether bond dimensions are
otherwise being adapted.

\subsection{Time-dependent variational principle (TDVP)}\label{app:b4-tdvp-overview}
TDVP~\cite{bib:haegeman2011tdvp,bib:haegeman,bib:lubich2014,bib:paeckel2019}
formulates
$\partial_t |\Psi\rangle = -i H |\Psi\rangle$ as the
best approximation on the rank-$D$ MPS manifold
$\mathcal{M}_D$:
\begin{equation}\label{eq:tdvp}
\partial_t |\Psi(t)\rangle = -i\, \mathcal{P}_{T_{\Psi(t)}\mathcal{M}_D}
H |\Psi(t)\rangle,
\end{equation}
where $\mathcal{P}_{T_{\Psi}\mathcal{M}_D}$ projects onto the
tangent space of $\mathcal{M}_D$ at $\Psi$. Putting the MPS in the
left/right canonical gauge (each core left- or right-orthonormal up to
the active bond) splits this projector into a sum of mutually
orthogonal per-site (one-site) or per-bond (two-site) terms; applying
them in sequence yields the site-by-site Lie--Trotter sweep. The projector ensures that the
flow stays exactly on $\mathcal{M}_D$, so the bond dimension is
preserved (one-site) or grows under SVD truncation (two-site).


\subsection{Action matching and the JAM loss}\label{app:a3}
\emph{Action matching} (AM) and its joint variant JAM
\cite{bib:jam2023} are the velocity learning algorithms underlying
wavefunction flows. The construction begins from
the optimization characterization of the conservative velocity
$v_t = \nabla \Vt$ that solves the continuity
equation~\eqref{eq:continuity} for given $p_0, p_1$.

\paragraph{Variational characterization}
Among all conservative velocity fields, the one that satisfies
Eq.~\eqref{eq:continuity} for the marginals
$\{p_t\}_{t\in[0,T]}$ of a stochastic interpolant minimizes
the kinetic action
\begin{equation}\label{eq:action}
\mathcal{A}[\Vt] = \frac{1}{2}\int_0^T \int |\nabla \Vt(x)|^2
p_t(x)\, dx\, dt
\end{equation}
subject to the continuity equation~\eqref{eq:continuity}
(a Benamou--Brenier-type characterization), with $\Vt$ determined
up to an additive constant. Direct minimization of
Eq.~\eqref{eq:action} requires evaluating $p_t$, which is
intractable. AM avoids this by integrating by parts.

\paragraph{The AM loss}
Differentiating Eq.~\eqref{eq:action} along the gradient flow of
$\Vt$, using Eq.~\eqref{eq:continuity} and integration by parts,
and writing
$x_t = a(t) x_0 + b(t) x_1$, gives
\begin{equation}\label{eq:am-loss}
\Vt^{\star} = \arg\min_\Vt \,
\mathbb{E}_{t \sim \mathcal{U}[0,T]}\,
\mathbb{E}_{x_0, x_1}
\left[\, \tfrac{1}{2} \|\nabla \Vt(x_t)\|^2 -
\nabla \Vt(x_t) \cdot \bigl(\dot a(t) x_0
+ \dot b(t) x_1\bigr) \right].
\end{equation}
The minimizer is exactly the conditional-mean
velocity~\eqref{eq:cond-mean}, and the loss is computable purely
from interpolant draws $(x_0, x_1) \to x_t$, with no density
evaluation needed. JAM is the practical regression-style
relaxation
\begin{equation}\label{eq:jam-loss}
\Vt^{\star} \approx \arg\min_\Vt
\mathbb{E}_t \mathbb{E}_{x_0, x_1}
\bigl[ \|\nabla \Vt(x_t) - (\dot a(t) x_0 + \dot b(t) x_1)\|^2
\bigr],
\end{equation}
identical to Eq.~\eqref{eq:am-loss} up to a $\Vt$-independent
constant. Completing the square in Eq.~\eqref{eq:am-loss} also
shows that $\mathcal{L}_{\mathrm{JAM}}$ is, up to a
$\Vt$-independent constant, equal to
$\frac{1}{2}\,\mathbb{E}\!\bigl[\|\nabla \Vt(x_t) - u_t^\star\|^2
\bigr]$, where $u_t^\star$ is the optimal conditional-mean
velocity~\eqref{eq:cond-mean}. Restricting to conservative
velocities is therefore not a loss of generality \emph{for the
specific setting considered here}---a Gaussian source with a linear
stochastic interpolant, whose optimal conditional-mean velocity is a
score, hence a gradient---because the conservative class then regresses
to the same optimal target as unconstrained flow matching. This is not a
claim about arbitrary probability flows, whose optimal velocity need not
be conservative. We use Eq.~\eqref{eq:jam-loss} as the training loss.

\subsection{Exact velocity for Gaussian mixture targets}\label{app:a4}
For Gaussian mixture targets, the conditional mean velocity is
known analytically; we use this exact $\Vt$ throughout
to isolate the compression error from velocity learning.

\paragraph{Tweedie's identity}
Let $x_t = x_0 + \sigma_t \,\xi$ with
$\xi \sim \mathcal{N}(0, I)$ a Gaussian shift of width $\sigma_t$
about an arbitrary $x_0$-distribution. The marginal density
$p_t(x_t) = \int p_0(x_0) \mathcal{N}(x_t; x_0, \sigma_t^2 I)\,dx_0$
satisfies Tweedie's identity~\cite{bib:efron2011}:
\begin{equation}\label{eq:tweedie}
\mathbb{E}[x_0 \mid x_t = x] = x + \sigma_t^2 \nabla_x \log p_t(x).
\end{equation}
Equivalently, the score $\nabla_x \log p_t$ is the
denoising direction. Eq.~\eqref{eq:tweedie} is exact, not
asymptotic, for any prior $p_0$ and any Gaussian shift width.

\paragraph{Specializing to a linear interpolant}
Let $x_t = (1-t) x_0 + t x_1$ with
$x_0 \sim \mathcal{N}(\mu_0, \sigma_0^2 I)$ and $x_1 \sim p_1$, and
let $\sigma_t^2 = (1-t)^2 \sigma_0^2 + t^2 \sigma_1^2$ be the
(effective) variance contributed by the source after rescaling
into the $x_t$ frame. The conditional mean
velocity~\eqref{eq:cond-mean} for this interpolant is
\begin{equation}
v_t(x) = \mathbb{E}\bigl[x_1 - x_0 \mid x_t = x\bigr]
= \frac{x - \mathbb{E}[x_0 \mid x_t = x]}{t}
= \frac{x - \mu_0}{t} +
\frac{(1-t)\sigma_0^2}{t}\, \nabla_x \log p_t(x),
\end{equation}
applying Eq.~\eqref{eq:tweedie} in the second equality
to the Gaussian-noise channel
$x_t - (1-t)\mu_0 = t x_1 + (1-t)(x_0 - \mu_0)$, which gives
$\mathbb{E}[x_0\mid x_t=x] = \mu_0 - (1-t)\sigma_0^2 \nabla_x
\log p_t(x)$. Because this $v_t$ is the gradient of a
scalar potential, we may integrate to recover $\Vt$ up to a
$x$-independent additive constant:
\begin{equation}\label{eq:Vt-gmm}
\Vt(x) = \frac{\|x - \mu_0\|^2}{2 t}
+ \frac{(1-t)\sigma_0^2}{t} \log p_t(x).
\end{equation}

\paragraph{Closed-form for orthogonal-mode mixtures}
If the target $p_1$ is a Gaussian mixture
$p_1(x) = \tfrac{1}{M}\sum_m \mathcal{N}(x; \mu_m, \sigma_1^2 I)$
with $M$ centers $\{\mu_m\}$, then the marginal of the linear
interpolant---the convolution of the rescaled target with the
rescaled Gaussian source---is again a Gaussian mixture (a
convolution of Gaussians is a Gaussian):
\begin{equation}\label{eq:p_t-gmm}
p_t(x) = \frac{1}{M} \sum_m
\mathcal{N}\bigl(x;\, (1-t)\mu_0 + t \mu_m,\, \sigma_t^2 I\bigr).
\end{equation}
Substituting Eq.~\eqref{eq:p_t-gmm} into
Eq.~\eqref{eq:Vt-gmm} produces an
$\Vt(x)$ involving only an explicit log-sum-exp over the $M$
shifted centers. For the orthogonal-mode
Gaussian mixtures we use ($2d$ centers $\{\pm c\,e_j\}_{j=1}^d$, with
isotropic covariance), each coordinate axis carries an
independent two-cluster mixture in the score. Still, the joint
$\log p_t$ remains a single coupled log-sum-exp because
$p_t$ does not factorize over coordinates. This
non-factorization gives the position-diagonal
$\Vt$ a generically high bond dimension in the MPO,
thus motivates the compression approach.

\paragraph{Use in the experiments}
We evaluate Eq.~\eqref{eq:Vt-gmm} on the
$N^d$ grid points to construct the dense-grid
$\mathrm{diag}(\Vt(x_k))$ for the dense-grid baseline, and we hand the
same $\Vt(x)$ as a callable to TCI for MPO construction. The analytic $\Vt$ isolates the compression error from velocity
learning. We use the JAM-trained oracle (MLP, or the trained MPS potential of
Methods, Section~\ref{app:trained-v}) for the Swiss roll target and for the trained MPS potential study (Section~\ref{sec:results-bypass}).

\subsection{Bond dimension of orthogonal-mode Gaussian mixtures}
\label{app:a5}
The orthogonal-mode Gaussian mixture target with $2d$ centers
$\{\pm c\, e_j\}_{j=1}^d$ admits an MPS decomposition of bond
dimension at most $2d$. The argument is direct: the marginal $p_t$ of
Eq.~\eqref{eq:p_t-gmm} is a uniform mixture of $2d$ shifted
Gaussians, and each shifted Gaussian factorizes across coordinates
because the covariance $\sigma_t^2 I$ is isotropic. The
$m$-th component
$\mathcal{N}\!\bigl(x;(1-t)\mu_0 + t\mu_m,\sigma_t^2 I\bigr) =
\prod_{j=1}^d \mathcal{N}\!\bigl(x_j; \cdot,\sigma_t^2\bigr)$
is a tensor product of single-coordinate Gaussians, hence an MPS
of bond dimension exactly one. The full $2d$-component mixture
therefore admits an MPS representation of bond dimension at most $2d$:
one rank-one component per center, summed across the bond
dimension.
This $\mathcal{O}(d)$ bound applies to $p_t$ as a density
tensor. The potential $\Vt$ in Eq.~\eqref{eq:Vt-gmm} depends on
$p_t$ through $\log p_t$, and the V-step phase $e^{i\beta\Vt}$
depends on it further through complex exponentiation; neither
operation preserves bond dimension in general, so a sum of $2d$ rank-one
product Gaussians need not retain $\mathcal{O}(d)$ approximate
bond dimension after $\log$ or $\exp$. We therefore treat the achieved
bond dimensions of $\Vt$ and $e^{i\beta\Vt}$ as quantities to be measured
empirically rather than derived from the rank-$2d$ bound on
$p_t$. That the relevant ranks remain compressible for the
targets in this paper is an experimental finding, not a corollary
of the Gaussian mixture closed form.

In our experiments, the empirical chi-trajectories of 2TDVP
confirm that the realized bond dimension of the propagated MPS
$\psit$ stabilizes well below $2d$
at small times, growing as $t \to 1$ to a saturating value
compatible with $D \in \{16, 32\}$ for $d \le 8$. Because the
rank-$2d$ bound is structural only for $p_t$ itself and not for
the wavefunction $\sqrt{p_t}$, the V-step phase, or the propagated
$\psit$, we read this as empirical evidence that the operating
rank cap is adequate for these targets rather than a worst-case
guarantee.

\subsection{Cost models}\label{app:a6}
Per Trotter step, the dense-grid baseline costs
$\mathcal{O}(N^d \log N)$ in time and $\mathcal{O}(N^d)$ in
memory, dominated by the $d$-dimensional FFT. The MPS V-step
is $\mathcal{O}(d N D^3)$ for a 1TDVP sweep and
$\mathcal{O}(d N D^2 m)$ for a TCI sweep with $m$ pivots, with
memory $\mathcal{O}(d N D^2)$ throughout. If $D$ scales linearly
with $d$ (an empirical assumption for our targets, \emph{not} implied by the
rank-$2d$ bound on $p_t$ of Section~\ref{app:a5}, which does not pass
through $\log$ and $\exp$ to $\Vt$), the 1TDVP wall-clock time scales as $d^4$,
a polynomial cost contrasting with the dense $N^d$. Empirically
(Section~\ref{sec:results-cost}) we measure
$d^{2.7}$ for fixed-$D$ runs, where the $D^3$ contribution is
absorbed into the prefactor. In Section~\ref{app:a7} we discuss how
the observed tensor network costs depend on the achieved
bond dimensions of the queried V-step objects and of the maintained
MPS state, and where these sit relative to the analytic
smoothness conditions of~\cite{bib:layden2025}.

\subsection{The bond dimension of \texorpdfstring{$V_t$}{Vt} as the unified resource
parameter}\label{app:a7}

The Trotter-step count and grid resolution bounds of
Layden \emph{et al.}~\cite{bib:layden2025} (Theorems~1 and~2
therein) fix the number of Trotter steps
$r = \mathcal{O}(d^{6}T^{2+2/s}/\epsilon^{1+2/s})$ and the
grid resolution $N = \mathcal{O}(L d (T/\epsilon)^{1/(2s)})$
required to prepare $\ket{\Psi_T}$ to error $\epsilon$ (here
and throughout, $s$ is the adjustable smoothness parameter of
Ref.~\cite{bib:layden2025}: an integer $s \ge (d+7)/4$ for which
$\sqrt{p_t}$ and $V_t$ are assumed $2(s{+}1)$-times continuously
differentiable, so larger $s$ can be chosen for smoother
targets). Those bounds charge the V-step $e^{i\beta\Vt}$ to an oracle: on
hardware, $\Vt$ is computed into an ancilla register and uncomputed around
a phase rotation~\cite{bib:layden2025}, at a cost set by the
reversible-arithmetic circuit for the learned $\Vt$ (for instance, an MLP
forward pass), for which no a priori certificate is available. Absent
structural access of this kind---given only pointwise evaluation of a
generic $\Vt$ coupling all $d$ coordinates---both the classical simulation
cost and the gate count of a single V-step scale as $\Theta(N^d)$ in the
worst case.

We show in this section that a single classical-algebraic property
of $\Vt$, its bond dimension $D_V$, simultaneously governs (i)~the
number of classical oracle queries to $\Vt$ required to build the
V-step by TCI (Proposition~\ref{thm:tci-query}); (ii)~the gate count of
a precompiled quantum circuit implementing the V-step
(Proposition~\ref{thm:tci-circuit}); and (iii)~the empirically measured
wall-clock time and memory of TDVP (Section~\ref{sec:results-cost}). The bond
dimension is thus a shared structural parameter, though the quantum
implementation still requires separate analyses of data access, arithmetic, control,
and error.
Corollary~\ref{cor:total-cost} composes the
per-step bounds with the Trotter-step count of
Ref.~\cite{bib:layden2025} to obtain end-to-end resource
estimates for qsample preparation, and
Corollary~\ref{cor:gmm-precompile} specializes to the
orthogonal-mode Gaussian mixture target under the assumption
$D_V = \mathcal{O}(d)$, motivated by the mixture structure of
Section~\ref{app:a5} and supported empirically
(Section~\ref{sec:results-bypass}). The remarks that follow
discuss why TDVP can achieve a smaller constant than a
build-then-apply pipeline, and where the
bond-dimension assumption sits relative to the analytic smoothness
conditions of~\cite{bib:layden2025}.

\paragraph{Setup and definitions}
Fix $t \in [0,T]$ and view $\Vt$ as a $d$-index tensor
$\Vt(i_1,\dots,i_d)$ on the $N$-per-axis grid $X_N^d$.

\begin{definition}[Bond dimension]\label{def:tt-rank}
The tensor $\Vt$ has \emph{bond dimension} $D_V$ (the \emph{tensor-train (TT)
rank} of the numerical analysis literature) if, for every
$\ell \in \{1,\dots,d-1\}$, the $\ell$-th matrix unfolding of
$\Vt$, the $N^{\ell} \times N^{d-\ell}$ matrix obtained by
grouping the first $\ell$ indices as rows and the remaining
$d-\ell$ as columns, has rank at most $D_V$. Equivalently,
$\Vt$ admits an MPS representation of the form of Eq.~(1) of the main text with
all bond dimensions bounded by $D_V$.
\end{definition}

Smoothness conditions of the kind assumed in~\cite{bib:layden2025}
control spectral discretization error but do not, in themselves,
imply a small bond dimension (Remark~\ref{rem:beyond-smoothness}).
MPS-compressibility is therefore an additional structural
assumption on the target geometry; we verify it analytically for
the mixture density of orthogonal-mode Gaussian mixtures in
Section~\ref{app:a5} and empirically for
the trained MPS potential on the Gaussian mixture in
Section~\ref{sec:results-bypass}.

\paragraph{Per-step query and gate complexity}
Our first result bounds the cost of building the V-step generator by
tensor cross interpolation; the second compiles the V-step it
defines into a quantum circuit.

\begin{proposition}[Query complexity of the TCI V-step build]
\label{thm:tci-query}
Let $\Vt$ have bond dimension at most $D_V$ on $X_N^d$. Tensor
cross interpolation~\cite{bib:oseledets2010ttcross,bib:ttcross,bib:xfac2025}
reconstructs $\Vt$ exactly, as a rank-$D_V$ MPS
$\widetilde{V}_t$ with $\widetilde{V}_t = \Vt$ on the full grid, using
\begin{equation}\label{eq:query-perstep}
Q_{\mathrm{step}}
\;=\;
\mathcal{O}\!\left(d\,N\,D_V^{\,2}\right)
\end{equation}
pointwise evaluations of $\Vt$; the diagonal V-step unitary
$e^{i\beta\widetilde{V}_t}$ it defines then equals $e^{i\beta\Vt}$
exactly, for every $\beta$.
\end{proposition}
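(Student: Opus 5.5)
The plan is to prove exact recovery by the standard nested skeleton (cross) interpolation argument of Oseledets--Tyrtyshnikov, and then count queries. The key algebraic fact is the one behind the matrix case. If a matrix $M$ has rank $r$, and $I,J$ are row and column index sets with $|I|=|J|=r$ and $P=M|_{I\times J}$ nonsingular, then $M = M|_{:,J}\,P^{-1}\,M|_{I,:}$ exactly. The reason is that the $r$ selected columns span the column space and the $r$ selected rows span the row space. I would state and verify this first, since it reduces exactness to the existence of nonsingular pivot blocks of full size $D_V$ at every bond.

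Next I lift this to $d$ indices by induction over bonds, using nested index sets in the sense of Definition~\ref{def:tt-rank}. At bond $\ell$, the $\ell$-th unfolding $\Vt^{\langle\ell\rangle}\in\mathbb{R}^{N^\ell\times N^{d-\ell}}$ has rank $r_\ell\le D_V$. I choose left multi-indices $I_\ell\subseteq I_{\ell-1}\times[N]$ and right multi-indices $J_\ell\subseteq[N]\times J_{\ell+1}$ with $|I_\ell|=|J_\ell|=r_\ell$, such that $P_\ell=\Vt(I_\ell,J_\ell)$ is nonsingular.

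Such nested sets exist. Because the row span of $\Vt^{\langle\ell\rangle}$ is generated by the rows indexed by $I_{\ell-1}\times[N]$, the rank of the restricted unfolding $\Vt^{\langle\ell\rangle}(I_{\ell-1}\times[N],:)$ is still $r_\ell$. This follows from exactness at bond $\ell-1$, which expresses every row as a combination of those rows, so a maxvol or greedy full-pivot choice inside it gives a nonsingular block. Applying the matrix identity bond by bond then gives the TT-cross formula
\[
\widetilde V_t(i_1,\dots,i_d)=\prod_{\ell} \Vt(I_{\ell-1},i_\ell,J_\ell)\,P_\ell^{-1},
\]
with the last factor omitted. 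This formula equals $\Vt$ on all of $X_N^d$, and its bond dimensions are $r_\ell\le D_V$.

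For the count, the only entries ever evaluated are the three-index fibers $\Vt(I_{\ell-1},i_\ell,J_\ell)$. There are at most $N D_V^2$ of these per site and $d$ sites, giving $Q_{\mathrm{step}}=\mathcal{O}(dND_V^2)$. The pivot matrices $P_\ell$ are subblocks of these fibers, so they cost nothing extra. A constant number of DMRG-like sweeps to locate pivots, each touching the same fibers, preserves the bound up to a constant. The final claim is then immediate: $\widetilde V_t=\Vt$ pointwise, and $e^{i\beta\,\cdot}$ acts entrywise on the diagonal, so $e^{i\beta\widetilde V_t}=e^{i\beta\Vt}$ for every $\beta$.

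The main obstacle is the pivot-finding step. The statement asserts that TCI finds full-rank nested pivots within $\mathcal{O}(dND_V^2)$ queries, but greedy or maxvol searches from a bad initial pivot could in principle stall at lower rank. I would handle this by proving existence of nested pivots inductively as above. I would then argue that a rank-revealing search over the fiber $\Vt(I_{\ell-1},\cdot,J_{\ell+1}\times[N])$ suffices, which adds at most an $N$ factor per bond that can be absorbed by restricting the search to the $\mathcal{O}(ND_V^2)$-entry superblock. If this cannot be made rigorous, the fallback is to state the bound for an idealized pivot oracle, with sweep count treated as a constant.
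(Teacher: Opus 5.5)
Your core argument is the paper's: the skeleton identity makes the cross decomposition exact once every pivot block is a nonsingular $r_\ell\times r_\ell$ submatrix, and the count is one $D_V\times N\times D_V$ fiber per bond, with the pivot matrices sitting inside those fibers. Your inductive existence argument for left-nested pivots is correct, and left-nestedness is all you need. The identity $V_t(I_{\ell-1}\times[N],:)=V_t(I_{\ell-1}\times[N],J_\ell)\,P_\ell^{-1}\,V_t(I_\ell,:)$ telescopes to your product formula, and it makes $P_\ell$ a row-subblock of $V_t(I_{\ell-1},\cdot,J_\ell)$. Your argument picks $J_\ell$ among all columns, so it does not deliver the extra condition $J_\ell\subseteq[N]\times J_{\ell+1}$. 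Simply drop that condition.

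The real problem is the step you flag, and the repair you sketch cannot work. A rank-revealing search confined to an $\mathcal{O}(ND_V^2)$-entry block certifies nothing. Even the full two-site block $V_t(I_{\ell-1}\times[N],[N]\times J_{\ell+1})$, which already costs $N^2D_V^2$ queries per bond, has rank $r_\ell$ only if $I_{\ell-1}$ and $J_{\ell+1}$ already span the relevant row and column spaces. That is exactly what is being constructed.

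More decisively, no pointwise-query procedure recovers every tensor of bond dimension at most $D_V$ with fewer than $N^d$ queries. The bond-dimension-one tensor $c\,\delta_{x,x^\ast}$ agrees with the zero tensor on every query set that misses $x^\ast$. An adversarially placed $x^\ast$ therefore defeats any such procedure. The proposition is true only under an additional hypothesis, and that is how the paper's sketch proceeds: it assumes the pivot submatrices are nonsingular and explicitly disclaims any general sweep-count guarantee.

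Your ``fallback'' is therefore not optional. State it as a hypothesis: the pivot search, from its initialization, returns nonsingular $r_\ell\times r_\ell$ blocks within a constant number of one-site sweeps. With that hypothesis, the rest of your proof goes through as written.
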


\begin{proof}[Proof sketch]
The maxvol pivot-selection scheme sweeps over the $d-1$ bonds; at
each bond it evaluates a $D_V \times N \times D_V$ block of the
target tensor at maxvol-optimal pivots, i.e.\
$\mathcal{O}(N D_V^{\,2})$ evaluations of $\Vt$ per bond and
$\mathcal{O}(d N D_V^{\,2})$ in total. When every unfolding has rank
at most $D_V$ and the pivot submatrices are nonsingular, the
skeleton (cross) decomposition is exact~\cite{bib:oseledets2010ttcross}.
When $\Vt$ is only approximately of bond dimension
$D_V$---the practical case---maxvol cross interpolation is
quasi-optimal in the sup norm~\cite{bib:savostyanov2014}, and the
elementwise error propagates to the V-step through the Lipschitz
bound $\|e^{i\beta\widetilde{\Vt}} - e^{i\beta\Vt}\|_\infty
\le |\beta|\,\|\widetilde{\Vt} - \Vt\|_\infty$, so a per-step
operator-norm tolerance $\eta_V$ requires elementwise accuracy
$\eta_V/|\beta|$ on the generator; the number of refinement sweeps
this takes is set by the singular-value profile of the unfoldings
and is small in practice~\cite{bib:xfac2025}, though no general
sweep-count guarantee is available.
\end{proof}

\begin{proposition}[Quantum gate complexity of the V-step]
\label{thm:tci-circuit}
Let $\widetilde{V}_t$ be a rank-$D_V$ MPS on $X_N^d$
(Proposition~\ref{thm:tci-query}), with core entries and arithmetic
carried out at $b$ bits of fixed-point precision. The position-diagonal
unitary $e^{i\beta\widetilde{V}_t}$ admits a quantum-circuit
implementation on $n_{\mathrm{sys}} = d \log_2 N$ system qubits,
with an $\mathcal{O}(b\,D_V)$-qubit arithmetic work register, using
\begin{equation}\label{eq:gate-count}
G_{\mathrm{step}}
\;=\;
\mathcal{O}\!\left(d\,N\,D_V^{\,2}\right)
\end{equation}
two-qubit gates up to factors polynomial in $b$, in a depth-$d$
sequence of per-site blocks. The count is a construction-level
estimate: it assumes $b$ is chosen so that fixed-point rounding
through the boundary-vector contraction is negligible, an error
analysis we do not perform.
\end{proposition}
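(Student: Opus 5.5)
The construction is a compute--phase--uncompute circuit in which the coherent value register is filled by a sequential contraction of the MPS cores of $\widetilde{V}_t$, rather than by a generic reversible arithmetic circuit for $\Vt$. Write $\widetilde{V}_t(i_1,\dots,i_d)=A_1[i_1]A_2[i_2]\cdots A_d[i_d]$ as in Eq.~(1) of the main text, and encode each index $i_j$ in a $\log_2 N$-qubit system register. Allocate a work register holding a boundary row vector $\ell_j\in\mathbb{R}^{D_V}$, stored as $D_V$ fixed-point numbers of $b$ bits each; this accounts for the $\mathcal{O}(b\,D_V)$ qubits in the statement.

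The circuit runs in three stages.
\begin{enumerate}
\item \textbf{Compute.} Apply a depth-$d$ sequence of blocks $B_1,\dots,B_d$. Block $B_j$ acts on register $j$ and the work register and implements the reversible update $\ell_{j-1}\mapsto \ell_j=\ell_{j-1}A_j[i_j]$, with $\ell_0=1$. After $B_d$ the work register holds the scalar $\widetilde{V}_t(i_1,\dots,i_d)$.
\item \textbf{Phase.} Apply $e^{i\beta v}$ with $b$ controlled single-qubit phase rotations, one per bit of the value register, with angles $\beta 2^{k}$.
\item \textbf{Uncompute.} Apply $B_d^\dagger\cdots B_1^\dagger$.
\end{enumerate}
Because everything is controlled on computational basis states of the system register, the net action is the diagonal unitary $\sum_x e^{i\beta\widetilde V_t(x)}|x\rangle\langle x|$. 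This equals $e^{i\beta\Vt}$ exactly by Proposition~\ref{thm:tci-query} (up to rounding).

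Each block $B_j$ is built from two ingredients. The first is a multiplexor over the $N$ values of $i_j$ that loads the $D_V\times D_V$ core slice $A_j[i_j]$. To avoid storing the slice, the vector--matrix product can be done entrywise: for each of the $D_V^2$ pairs $(a,c)$, a controlled multiply-accumulate adds $A_j[i_j]_{ac}\,(\ell_{j-1})_a$ into $(\ell_j)_c$. The classical constant $A_j[i_j]_{ac}$ is selected by an $N$-way multiplexor on register $j$. A multiplexor over $N$ classical $b$-bit constants costs $\mathcal{O}(N\,\mathrm{poly}(b))$ gates (a QROM/select network), and there are $D_V^2$ of them per site, so each block costs $\mathcal{O}(N D_V^2\,\mathrm{poly}(b))$. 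Summing over $d$ sites, and doubling for uncomputation, gives $G_{\mathrm{step}}=\mathcal{O}(dND_V^2)$ up to $\mathrm{poly}(b)$, which is Eq.~\eqref{eq:gate-count}.

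The main obstacle is reversibility of the update $\ell_{j-1}\mapsto\ell_j$, which is not injective in general. I would handle it by keeping all boundary vectors $\ell_0,\dots,\ell_d$ in separate registers, i.e.\ Bennett-style compute and later uncompute. Naively this costs $\mathcal{O}(b\,D_V\,d)$ qubits rather than $\mathcal{O}(b\,D_V)$, and the gate count is unchanged. To recover the stated register size, I would try two routes:
\begin{itemize}
\item a pebbling schedule, trading ancillas for a mild gate overhead;
\item putting the MPS in a canonical gauge so that each $A_j[i_j]$ map can be embedded unitarily, using only $\mathcal{O}(bD_V)$ live qubits at any time.
\end{itemize}
If neither route closes cleanly, I would state the qubit count with the extra factor of $d$ and note it as the place where the construction-level estimate is loose. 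Fixed-point rounding is excluded by the hypothesis of the statement.
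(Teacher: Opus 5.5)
Your circuit is the paper's own construction. It holds a running $D_V$-dimensional boundary vector in $b$-bit fixed point and applies $\mathcal{O}(ND_V^2)$ multiplexed controlled multiply-adds per site. It then kicks the final scalar into a phase and uncomputes. Your count $\mathcal{O}(dND_V^2)$ up to $\mathrm{poly}(b)$ is obtained the same way.

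Where you go beyond the paper is the reversibility of the update $\ell_{j-1}\mapsto\ell_{j-1}A_j[i_j]$, and your concern is real. Rectangular or rank-deficient cores make the map non-injective even in exact arithmetic. Fixed-point rounding breaks bijectivity even for invertible slices. The paper's sketch keeps only ``the running boundary vector'' in the $\mathcal{O}(bD_V)$ register and uncomputes it at the end. That tacitly assumes a reversible in-place update, and nothing in the sketch resolves it.

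The remaining gap is that neither of your proposed repairs reaches the stated register size at the stated gate count.
\begin{itemize}
\item \emph{Pebbling.} Reversible pebbling on a chain of $d$ non-injective steps, with $s$ simultaneously stored vectors, can reach at most step $2^s-1$. So you need $s\ge\log_2(d+1)$ stored vectors. Bennett's schedule then gives $\mathcal{O}(bD_V\log d)$ work qubits at $\mathcal{O}(d^{\log_2 3}ND_V^2)$ gates, or $\mathcal{O}(d^{1+\epsilon}ND_V^2)$ with more registers. That changes the $d$-exponent of $G_{\mathrm{step}}$; it is not a mild overhead.
\item \emph{Canonical gauge.} Left-canonical form makes the stacked core $\sum_i \ket{i}\otimes A_j[i]$ an isometry. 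It does not make each fixed slice $A_j[i_j]$ invertible, so the fixed-point update is still not a bijection. A unitary dilation of that isometry acts on the amplitudes of a $\lceil\log_2 D_V\rceil$-qubit bond register. That yields a subnormalized block encoding of $\mathrm{diag}(\widetilde{V}_t)$, whose exponential needs QSP/QSVT with degree growing with $\beta$ times the subnormalization. This is a different construction with a different cost, not the arithmetic oracle of the statement.
\end{itemize}

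What your argument actually establishes is your fallback. With out-of-place updates and the Bennett history kept, you get $G_{\mathrm{step}}=\mathcal{O}(dND_V^2)$ up to $\mathrm{poly}(b)$ with $\mathcal{O}(bD_V\,d)$ work qubits. State the result that way. The extra factor of $d$ leaves the total qubit count at $\mathcal{O}(d(\log_2 N + bD_V))$, the same linear-in-$d$ order as the system register. It also leaves untouched the gate counts that Corollaries~\ref{cor:total-cost} and~\ref{cor:gmm-precompile} consume.
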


\begin{proof}[Proof sketch]
The primary route is an arithmetic phase oracle that consumes the
$\widetilde{V}_t$ cores directly: sweeping the sites in order, a
work register holds the running $D_V$-dimensional boundary vector
in $b$-bit fixed point; at site $j$ the core matrix selected by the
position register, $A_j[x_j]$, is applied by
$\mathcal{O}(N D_V^{\,2})$ controlled multiply-adds; after site $d$
the scalar $\beta\widetilde{V}_t(x)$ is kicked into a phase and the
register is uncomputed. This gives the
count~\eqref{eq:gate-count} in a depth-$d$ sequence of per-site
blocks.
\end{proof}

\paragraph{End-to-end resource bounds}
Composing the per-step bounds of
Propositions~\ref{thm:tci-query}--\ref{thm:tci-circuit} with the
Trotter-step count of~\cite{bib:layden2025} yields total costs
for preparing the qsample $\ket{\Psi_T}$.

\begin{corollary}[Total resource cost of qsample preparation]
\label{cor:total-cost}
Let $\Vt$ have bond dimension at most $D_V$ uniformly in
$t \in [0,T]$. Let
$r = \mathcal{O}(d^{\,6} T^{2+2/s} / \epsilon^{1+2/s})$ be the
Trotter-step count of~\cite{bib:layden2025} required to prepare
$\ket{\Psi_T}$ to error $\epsilon$, and let
$\eta_V = \mathcal{O}(\epsilon / r)$ be the per-step compression
tolerance, which suffices because operator-norm errors of the $r$
unitary factors accumulate at most additively. Then
the total number of classical oracle queries to $\Vt$ in the
TCI build of all $r$ generator MPOs, and the total gate count of
the precompiled (Trotterized, time-ordered) qsample preparation
unitary $U(T)$, both satisfy
\begin{equation}\label{eq:total-cost}
Q_{\mathrm{total}},\;\; G_{\mathrm{total}}
\;=\;
\mathcal{O}\!\left(
\frac{d^{\,7}\, T^{2+2/s}\, N\, D_V^{\,2}}
{\epsilon^{1+2/s}}
\right),
\end{equation}
up to the precision and quasi-optimality factors of
Propositions~\ref{thm:tci-query}--\ref{thm:tci-circuit}. Both
quantities are polynomial in $d$ whenever
$D_V$ is polynomial in $d$.
\end{corollary}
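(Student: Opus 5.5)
The corollary follows by composition: multiply the per-step costs of Propositions~\ref{thm:tci-query} and~\ref{thm:tci-circuit} by the number of V-step builds in the Trotterized circuit, then check that the tolerance $\eta_V$ can be absorbed into the hidden factors.

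\emph{Counting.} Each application of the eight-letter formula $W(\alpha,\beta)$ of Eq.~\eqref{eq:trotter8} contains four V-substeps $e^{\pm i\beta\Vt}$. All four are evaluated at a single time $t_k$, or at $\mathcal{O}(1)$ distinct times under a midpoint convention. Over $r$ Trotter steps the circuit therefore needs $\mathcal{O}(r)$ distinct generator MPOs. By hypothesis each $\Vt$ has bond dimension at most $D_V$, uniformly in $t$. Proposition~\ref{thm:tci-query} then bounds each build by $\mathcal{O}(dND_V^2)$ queries, and Proposition~\ref{thm:tci-circuit} bounds each V-substep by $\mathcal{O}(dND_V^2)$ two-qubit gates.

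The kinetic substeps $e^{i\alpha\Kop}=\bigotimes_j e^{i\alpha K_j^{(1)}}$ need no queries. On hardware they cost $d$ quantum Fourier transforms on $\log_2 N$ qubits plus diagonal phases, that is $\mathcal{O}(d\,\mathrm{polylog}\,N)$ gates. This is dominated by $dND_V^2$. Multiplying the per-substep bound by $\mathcal{O}(r)$ substeps and substituting $r=\mathcal{O}(d^6T^{2+2/s}/\epsilon^{1+2/s})$ gives exactly Eq.~\eqref{eq:total-cost}, with $d^6\cdot d=d^7$.

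\emph{Error budget.} The target state error $\epsilon$ is split between the Trotter/grid error of Ref.~\cite{bib:layden2025}, which is controlled by the choice of $r$ and $N$, and the compression error. For the compression error I would use the telescoping identity $\|\prod_k U_k-\prod_k\widetilde U_k\|\le\sum_k\|U_k-\widetilde U_k\|$, valid for unitaries. With $\mathcal{O}(r)$ factors each within $\eta_V$, the total is $\mathcal{O}(r\eta_V)=\mathcal{O}(\epsilon)$. By the Lipschitz bound in the proof of Proposition~\ref{thm:tci-query}, each factor needs elementwise generator accuracy $\eta_V/|\beta|$.

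In the exact-rank case the reconstruction is exact, so this accuracy costs nothing. In the approximate case it enters only through sweep counts and the precision $b$. The statement's qualifier ``up to the precision and quasi-optimality factors'' is exactly what absorbs these. I would note that $b=\mathcal{O}(\log(r|\beta|/\epsilon))$ bits suffice heuristically, which is polylogarithmic, but I would not claim a bound on the sweep count.

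\emph{Main obstacle.} The delicate step is making the approximate case rigorous. TCI carries no general sweep-count guarantee. In addition, a rank-$D_V$ approximation at accuracy $\eta_V/|\beta|$ may require bond dimension larger than $D_V$ once $\epsilon$ is small. I would handle this by reading the hypothesis as ``$\Vt$ is $\eta_V/|\beta|$-approximable at bond dimension $D_V$,'' and deferring the residual factors to the caveats already stated in Propositions~\ref{thm:tci-query} and~\ref{thm:tci-circuit}.

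The final claim follows immediately. Eq.~\eqref{eq:total-cost} is a monomial in $d$ times $D_V^2$, with $N$, $T$ and $\epsilon$ held fixed. It is therefore polynomial in $d$ whenever $D_V$ is, noting that $N=\mathcal{O}(Ld(T/\epsilon)^{1/(2s)})$ is itself polynomial in $d$.
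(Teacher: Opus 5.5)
Your proposal is correct and follows the paper's proof. That proof is exactly this direct composition: $r$ applications of the per-step query and gate bounds, with the QFT-based kinetic substeps (cost $\mathcal{O}(d(\log N)^2)$ each) dominated by the V-step term. Your telescoping error budget and your discussion of the approximate-rank case go beyond what the paper writes out, but they agree with the justification for $\eta_V = \mathcal{O}(\epsilon/r)$ that the statement itself gives.
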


\begin{proof}
Direct composition. $Q_{\mathrm{total}}$ is $r$ applications of
the per-step bound~\eqref{eq:query-perstep};
$G_{\mathrm{total}}$ is $r$ V-step circuits of cost
$G_{\mathrm{step}}$ from~\eqref{eq:gate-count} plus $r$ kinetic
substeps implemented by the quantum Fourier transform at cost
$\mathcal{O}(d (\log N)^2)$ each, and the V-step term dominates
whenever $N D_V^{\,2} \gtrsim d (\log N)^2$.
\end{proof}

The corollary shows that MPS-compressibility of $\Vt$ suffices to
make the per-step V-step cost polynomial in the black-box
(pointwise-evaluation) access model introduced at the start of this
section, where the worst case is $\Theta(N^d)$. For the classical
simulation as a whole one further condition is required: the
evolved state must itself remain representable at moderate bond
dimension $D_\Psi$, an empirical property of the flow
(Section~\ref{sec:results-bypass}) rather than a consequence of
$D_V$.

\paragraph{Specialization to orthogonal-mode Gaussian mixtures}
We can now compute~\eqref{eq:total-cost} for the principal
target class of the empirical study.

\begin{corollary}[Polynomial precompilation cost for
orthogonal-mode Gaussian mixture targets]\label{cor:gmm-precompile}
For the orthogonal-mode Gaussian mixture target of Section~\ref{app:a5},
assume $D_V = \mathcal{O}(d)$ uniformly in $t \in [\tau,T]$ for
some $\tau > 0$---an empirical assumption, consistent with the
achieved bond dimensions in our runs (Section~\ref{app:a5},
Section~\ref{sec:results-bypass}).
Substituting into~\eqref{eq:total-cost},
\begin{equation}\label{eq:gmm-precompile}
Q_{\mathrm{total}},\;\; G_{\mathrm{total}}
\;=\;
\mathcal{O}\!\left(
\frac{d^{\,9}\, N\, T^{2+2/s}}
{\epsilon^{1+2/s}}
\right),
\end{equation}
polynomial in $d$: degree~$9$ in the explicit $d$ factors with the grid
$N$ held symbolic, and one power higher once the grid bound
$N = \mathcal{O}(Ld(T/\epsilon)^{1/(2s)})$ of
Ref.~\cite{bib:layden2025} is substituted. The ratio of this cost to the
dense-grid-tabulation cost $\Theta(N^d)$ is exponentially small in
$d$ for any $N \ge 2$.
\end{corollary}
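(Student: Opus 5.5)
The plan is direct substitution into Corollary~\ref{cor:total-cost}, followed by a separate comparison against the dense-grid cost. First, I note the hypothesis is stated uniformly on $[\tau,T]$ rather than $[0,T]$. This is because Eq.~\eqref{eq:Vt-gmm} carries $1/t$ prefactors that are singular at $t=0$. I will therefore split the time interval. On $[0,\tau]$ the potential is close to the quadratic $\|x-\mu_0\|^2/(2t)$ plus a term $\frac{(1-t)\sigma_0^2}{t}\log p_t$ whose $p_t$ is nearly a single isotropic Gaussian. Both pieces are sums of one-variable functions, so they have bond dimension at most $2$, a constant. Any fixed-$\tau$ startup segment therefore contributes at most an $\mathcal{O}(N)$-per-step cost with the same Trotter count, which the $[\tau,T]$ bound absorbs. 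If this startup treatment proves awkward, the fallback is to state the bound for the evolution on $[\tau,T]$ only and to absorb the initial segment into the preparation of $\Psi_\tau$, which is a Gaussian-mixture product-sum state.

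On $[\tau,T]$ I apply Corollary~\ref{cor:total-cost} with $D_V=\mathcal{O}(d)$, so that $D_V^2=\mathcal{O}(d^2)$. Then
\[
d^{7}\cdot d^{2}=d^{9},
\]
which gives Eq.~\eqref{eq:gmm-precompile} with the $T$ and $\epsilon$ dependence unchanged. The query and gate bounds inherit, unchanged, the precision and quasi-optimality caveats of Propositions~\ref{thm:tci-query} and~\ref{thm:tci-circuit}. For the gate count I also check the dominance condition $N D_V^2 \gtrsim d(\log N)^2$. With $D_V=\Theta(d)$ this reads $N d^2 \gtrsim d(\log N)^2$, which holds for all $N\ge 2$ and $d\ge 1$ up to constants. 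Hence the QFT kinetic terms do not change the leading order.

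Next I substitute the grid bound $N=\mathcal{O}(Ld(T/\epsilon)^{1/(2s)})$. This multiplies the result by one more factor of $d$, giving degree $10$, together with $L$ and an extra $(T/\epsilon)^{1/(2s)}$. I remark that $s\ge (d+7)/4$ makes the exponents $2/s$ and $1/(2s)$ themselves depend on $d$. Because they shrink as $d$ grows, however, the $T$ and $\epsilon$ factors only improve, and polynomiality in $d$ is unaffected for fixed $T$ and $\epsilon$.

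Finally, for the ratio, I compare $d^{9} N\,\mathrm{poly}(T/\epsilon)$ with $N^d$ per step. The honest comparison is per Trotter step: $d N D_V^2=\mathcal{O}(d^3N)$ against $N^d$. The ratio is $d^3 N^{1-d}$, which is at most $d^3 2^{1-d}$ and hence exponentially small in $d$. Summed over $r$ steps, both sides acquire the same factor $r$, so the ratio is preserved.

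The main obstacle is making the phrase ``exponentially small'' precise given that $N$ itself grows with $d$ under the grid bound. I will handle this by noting that a larger $N$ only strengthens the conclusion, since $N^{1-d}$ decreases in $N$. I will also flag that everything is conditional on the empirical $D_V=\mathcal{O}(d)$ assumption, which Section~\ref{app:a5} explicitly does not derive from the rank-$2d$ bound on $p_t$.
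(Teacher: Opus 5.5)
Your central step, substituting $D_V^{2}=\mathcal{O}(d^{2})$ into \eqref{eq:total-cost} to get $d^{7}\cdot d^{2}=d^{9}$, is the paper's entire proof. Your grid-bound substitution (degree $10$) and your per-step ratio $d^{3}N^{1-d}\le d^{3}2^{1-d}$ are sound elaborations of claims the paper states without argument. Two small slips are harmless. The startup cost with $D_V=2$ is $\mathcal{O}(dN)$ per step, not $\mathcal{O}(N)$. In the dominance check you used $D_V=\Theta(d)$, but only $D_V=\mathcal{O}(d)$ is assumed. That does not matter, because $(\log N)^{2}=\mathcal{O}(N)$ keeps the kinetic term $r\,d(\log N)^{2}$ inside $\mathcal{O}(r\,d^{3}N)$ regardless of $D_V$.

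The part to drop is your primary treatment of $[0,\tau]$.

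\emph{It targets the wrong obstruction.} The paper's post-proof remark says the $1/t$ prefactor is a scalar and leaves the bond dimension unchanged. What it does instead is drive the norm constants of Ref.~\cite{bib:layden2025}, and hence the prefactors suppressed in \eqref{eq:gmm-precompile}, to infinity as $t\to0$. So showing low bond dimension on the startup segment does not address the issue. Asserting that the segment runs ``with the same Trotter count'' assumes exactly what the paper says fails.

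\emph{The low-rank claim is only approximate.} $\log p_t$ is a log-sum-exp over $2d$ shifted components, not the log of a single Gaussian. Hence $V_t$ does not have bond dimension exactly $2$ on $[0,\tau]$. Its non-separable residual is amplified by the $(1-t)\sigma_0^{2}/t$ prefactor and would need an explicit tolerance argument.

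\emph{It only covers small $\tau$.} The near-single-Gaussian picture holds only for small $t$. The $\tau$ in the hypothesis is given to you, not chosen, so the segment may not be covered at all.

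Your fallback is to evolve only on $[\tau,T]$ and treat $\Psi_\tau$ as the initial state. That is exactly what the paper does (it starts the flow at $t=10^{-3}$), and with it your proof is complete.
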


\begin{proof}
Substitution of $D_V^{\,2} = \mathcal{O}(d^2)$
into~\eqref{eq:total-cost} produces the stated bound.
\end{proof}

The restriction to $[\tau,T]$ excludes the $t \to 0$ singularity of
the analytic potential~\eqref{eq:Vt-gmm}: its $1/t$ prefactor is a
scalar and leaves the bond dimension unchanged, but it drives the
norm constants of Ref.~\cite{bib:layden2025} (and hence the
prefactors suppressed in~\eqref{eq:gmm-precompile}) to infinity as
$t \to 0$. We therefore start the flow at $t = 10^{-3}$.

\paragraph{Discussion}
The following remarks place Propositions~\ref{thm:tci-query} and~\ref{thm:tci-circuit} in context.

\begin{remark}[Generator versus exponential: the classical advantage
of TDVP]\label{rem:tdvp-vs-tci}
Both propositions charge every cost at the generator's bond
dimension $D_V$; a build-then-apply route that instead compresses the
unitary $e^{i\beta\Vt}$ into a global MPO incurs the exponential's bond
dimension $D_V^{\exp} \ge D_V$, which generically inflates with
$\beta\|\Vt\|$ (Methods, Section~\ref{app:b2}). TDVP realizes the
generator-route accounting classically: it integrates
$i\partial_s\ket{\Psi} = \Vt\ket{\Psi}$ under the $\Vt$-MPO, never
materializing the exponential
(Methods, Sections~\ref{app:b4}--\ref{app:b5}), so the query
bound~\eqref{eq:query-perstep} applies to it without modification.
\end{remark}

\begin{remark}[State-loading depth]\label{rem:state-loading}
Loading the final state $\ket{\Psi_T}$, of MPS bond dimension
$D_\Psi$, is a separate primitive from the generator costs charged in
Propositions~\ref{thm:tci-query}--\ref{thm:tci-circuit}. We review its depth
ladder (linear, logarithmic, or constant, by construction) in the
Introduction, under explicit gate,
precision, and connectivity assumptions; we do not compile it here.
\end{remark}

\begin{remark}[Smoothness alone is not enough]
\label{rem:beyond-smoothness}
Theorem~1 of~\cite{bib:layden2025} controls the
spatial-discretization error through Sobolev-type norms of
$\sqrt{p_t}$ and $\Vt\sqrt{p_t}$; such smoothness does not by
itself imply decaying unfolding singular values or a small bond
dimension $D_V$ of the kind entering Corollary~\ref{cor:total-cost}.
The bond dimension instead depends on the geometry of the target,
in particular, the number and spatial arrangement of its modes. The correlated Gaussian mixture regime
illustrates the gap: a smooth target with genuine cross-axis
covariance drives the required bond dimension well above that of the
orthogonal case, and at high $d$, this is the binding constraint.
Identifying which combinations of mode geometry, smoothness, and
support guarantee $D_V$ polynomial in $d$ is left to future work
and is the starting point for any rigorous analysis of MPS+TCI
wavefunction flow simulation beyond the regime studied here.
\end{remark}

\subsection{Rare-event indicator oracle}\label{app:rare-oracle}
The amplification results of Section~\ref{sec:results-rare}
require a marking
oracle $O_f\ket{x}=(-1)^{f(x)}\ket{x}$ flagging the rare-event subspace, with
\[
  f(x)=\mathbb{1}\!\left[\min_{j}\lVert x-c_j\rVert^2>(k\sigma)^2\right]
       \cdot
       \mathbb{1}\!\left[\min_{j}\lVert x-c_j\rVert < \lVert x-x_0\rVert\right],
\]
$\{c_j\}_{j=1}^{2d}$ the mode centers, $x_0$ the domain center, and $k\sigma$
the tail radius. The second factor encodes, coherently, the same exclusion
that the classical estimator applies by post-selection: it marks $x$ only when $x$
is closer to some mode than to the domain center, so untransported source mass
sitting near $x_0$ (which is also $>\!k\sigma$ from every mode) is not
counted as rare. Without it, the coherent oracle and the post-selected
classical event would integrate different regions.

The oracle is an efficient reversible circuit. With $x$ held in
$d$ coordinate registers, $O_f$ forms the $2d$ squared distances
$\lVert x-c_j\rVert^2=\sum_{i=1}^{d}(x_i-c_{j,i})^2$, minimizes them, and
compares both to $(k\sigma)^2$ and to $\lVert x-x_0\rVert^2$:
$\mathcal{O}(d^2)$ reversible additions,
multiplications, and comparisons on $\mathcal{O}(d\log N)$ qubits---polynomial
in $d$ and independent of the $N^d$ grid. The query complexity advantage of
Section~\ref{sec:results-rare} ($\mathcal{O}(1/\sqrt{\prare})$ vs.\ $\mathcal{O}(1/\prare)$ for sampling)
therefore carries over to end-to-end
gate cost, because each query invokes only this $\mathrm{poly}(d)$ oracle
alongside $U_{\mathrm{prep}}$.

We estimate the centers and
scale the oracle needs from the common (bulk)
samples: a
handful of draws per mode ($\gtrsim 10$) fixes each cluster's location and
covariance to high accuracy, so no rare events are needed to construct $O_f$. The
expensive quantity, the tail mass, is the one the quantum routine
accelerates; specifying the region it integrates over is cheap classical
preprocessing on abundant data. This presumes the rare region is known in
advance, which holds here because the tail is defined by a distance rule
against modes the bulk already reveals. When the rare events are not
known ahead of time, they cannot be flagged for amplification at all, and
discovery becomes part of the problem; Guo \emph{et al.}~\cite{bib:guo2026rare}
give a quantum algorithm for that setting, attaining the optimal scaling in
the rarity threshold without first learning which events are rare. For a general (non-Gaussian) target, the tail
is a threshold on the learned log-density or potential,
$f(x)=\mathbb{1}[V_T(x)>\tau]$, and the same tensor network representation of
$\Vt$ that drives the flow furnishes the indicator, so the oracle need not
be hand-built.

Two caveats apply. First, we simulate the exact
Grover outcome statistics for amplification rather than compiling $O_f$ and
$U_{\mathrm{prep}}$ to gates; the accounting just given is analytic, not measured.
Second, and more fundamentally, the query complexity advantage is generic to
amplitude estimation and amplification, and does \emph{not} by itself establish a dequantization-proof
separation. Whenever the observable $O_f$ and the prepared state are both
low-rank, the mass $\langle\Psi_T|O_f|\Psi_T\rangle$ can be evaluated by
classical tensor contraction, and no quantum advantage remains. The
orthogonal-mode tail indicator used here is plausibly of moderate
bond dimension, so this demonstration should be read as an illustration
of the downstream routine on a faithfully prepared state, not as a claim of
dequantization-resistant advantage on this particular target. The advantage applies
to observables that are efficiently oracle-accessible yet not efficiently
contractible---e.g.\ verifier rewards or acceptance predicates defined by an
external circuit rather than a low-rank field.

\section{Supplementary results}\label{app:supp-results}

\subsection{Cost--accuracy Pareto frontiers}\label{app:supp-pareto}
In Figure~\ref{fig:supp-pareto} we show the cost--accuracy Pareto
frontiers underlying the per-$d$ trajectory of
Figure~\ref{fig:cost}. We arrange the figure as a $2 \times 3$
grid: each column corresponds to a dimensionality
$d \in \{3, 4, 5\}$, the top row plots SW against the
dimensionless memory ratio (MPS / Dense-grid), and the bottom row
plots SW against wall-clock time per Trotter sweep. Each point is a
hyperparameter cell from the search grid of
Table~\ref{tab:search-grid} of the main text; the staircases are per-method
Pareto fronts; dashed and dotted lines mark the Dense-grid and JAM
reference SW, respectively. Across every column, JAM and the
TDVP variants form the lower-left Pareto frontier, and the
MPS-to-Dense-grid advantage on both
axes widens as $d$ grows. We restrict the supplementary Pareto
sweep to $d \le 5$, where each cell has $\sim 30$
hyperparameter samples per method; $d \ge 6$ was sampled too
sparsely for a meaningful per-method Pareto and is represented
in the main text only by the best-cell trajectory of
Figure~\ref{fig:cost}.

\begin{figure}[tbp]
\centering
\includegraphics[width=0.95\textwidth]{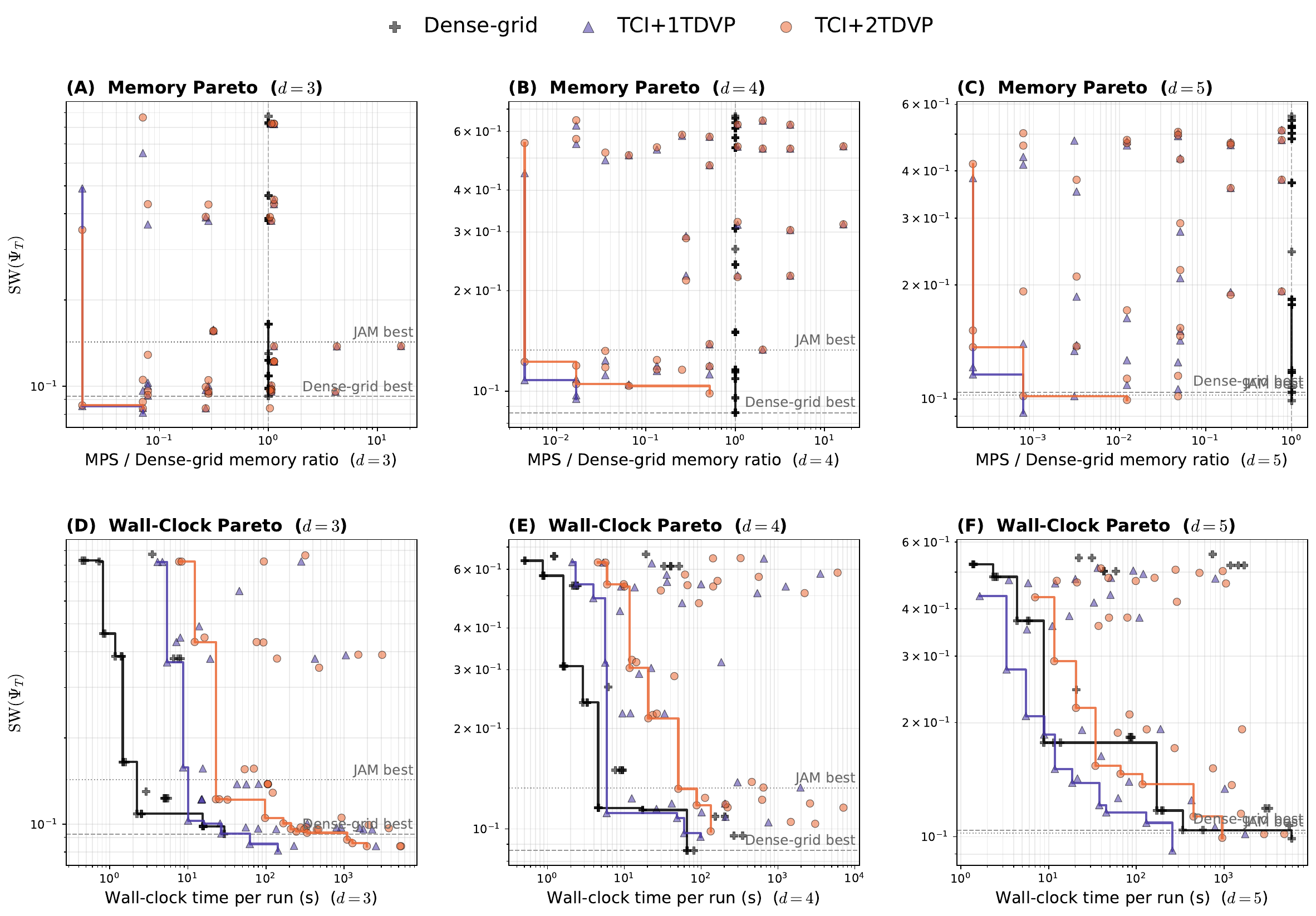}
\caption{\textbf{Cost--accuracy Pareto frontiers across spatial
dimensions.} Top row (panels a--c): SW vs MPS/Dense-grid memory ratio
at $d = 3$, $4$, $5$. Bottom row (panels d--f): SW vs evolution wall-clock time (s) at $d = 3$, $4$, $5$ ---
summed per Trotter step, the same quantity as Figure~\ref{fig:cost}(b), so the two are directly comparable. Each point is a
hyperparameter cell; staircases are per-method Pareto fronts;
dashed lines mark the Dense-grid reference SW and dotted lines the
JAM reference. JAM and the TDVP variants occupy the lower-left
frontier in every panel. The
MPS-to-Dense-grid advantage on both axes widens as $d$ increases.}
\label{fig:supp-pareto}
\end{figure}

\subsection{Replicated best-cell audit}\label{app:audit}
To put the best-cell numbers of Table~\ref{tab:sw} on a replicated, two-metric
footing, we re-ran the audit with $n{=}10$ random seeds per cell, reporting
sliced-Wasserstein as mean${\pm}$std and (in parentheses) the seed-mean
radial basis function (RBF) kernel MMD as a point value. Rather than an exhaustive $(N,K,D)$ grid we
locate the optimum
by a theory-informed coordinate descent, justified by how each axis behaves: the spatial error is monotone in the
grid $N$ ($\epsilon_{\mathrm{spatial}}\!\propto\!(Ld/N)^{2s}$,
Ref.~\cite{bib:layden2025}), so we fix
$N$ at the largest feasible grid. The Trotter count $K$ and the bond dimension
$D$ each show a nonmonotonic (interior) optimum in our sweeps, so we
sweep each and take the best $K^*,D^*$. We attribute the nonmonotonicity to
interactions among Trotter splitting, finite-step (Dirac--Frenkel)
integration, and the truncation rule---accumulated per-step error growing with
$K$ against the shrinking Trotter error, and truncation interacting with the
larger variational space at bigger $D$---rather than to a single mechanism. The TDVP routines use a single Dirac--Frenkel sweep per V-step;
over-iterating it degrades accuracy and incurs ${\sim}4{\times}$ higher cost. This
audit uses a \emph{trained} velocity potential and includes only JAM and
TCI+1TDVP, so it complements rather than replicates Table~\ref{tab:sw} (which
uses the analytic $V$ and also reports Dense-grid and TCI+2TDVP); its absolute
scale differs because it is evaluated at a smaller sample size. The
relevant results are the TCI+1TDVP-vs-JAM ordering and the
location of the optima.

\begin{table}[!htbp]
\centering
\caption{\textbf{Replicated $n{=}10$ best-cell audit} (trained velocity potential),
sliced-Wasserstein as mean${\pm}$std over the $10$ seeds, with the seed-mean MMD
(a point value, no spread) in parentheses; lower is better. The
best cell $(N,K^*,D^*)$ per entry is located by coordinate descent; the located
corners are interior ($K^*{\sim}40$, $D^*{\sim}8$), not at the box maxima.}
\label{tab:audit}
\begin{tabular}{lcccc}
\toprule
method & $d{=}2$ & $d{=}3$ & $d{=}4$ & $d{=}5$ \\
\midrule
Target--target (finite-sample) & $0.023{\pm}0.001$ & $0.025{\pm}0.002$ & $0.023{\pm}0.001$ & $0.024{\pm}0.001$ \\
JAM & $0.111{\pm}0.006$ & $0.084{\pm}0.005$ & $0.073{\pm}0.004$ & $0.077{\pm}0.004$ \\
TCI+1TDVP & $0.206{\pm}0.007$ (0.051) & $0.142{\pm}0.006$ (0.066) & $0.146{\pm}0.007$ (0.095) & $0.156{\pm}0.005$ (0.065) \\
\bottomrule
\end{tabular}
\end{table}

Two findings supplement the main text. \emph{(i)~TCI+1TDVP is the robust,
Pareto-optimal method.} Its SW stays in a narrow band ($0.14$--$0.21$)
across $d{=}2$--$5$
(Table~\ref{tab:audit}) at small $D^*$ and interior $K^*$. Beyond the
tabulated range, at the $d{\in}\{6,7,8\}$, $D{=}64$ frontier (the trained MPS
$V$-step, not in Table~\ref{tab:audit}; with no runtime TCI, the integrator is 1TDVP proper rather than TCI+1TDVP), 1TDVP remains the only
method that is both accurate and tractable. 1TDVP completes in tens of
seconds, consistent with the trained MPS potential eliminating the runtime TCI construction, and hence far below the analytic-$V$ runtime-TCI wall-clock times reported in
Section~\ref{sec:discussion}. TCI+2TDVP with runtime TCI, by contrast, exhausts
a 32\,GB worker, and a
128\,GB retry at $d{=}6$ ran over two hours without finishing. This behavior
underlies the cost scaling claim and motivates 1TDVP+SE
(Methods, Section~\ref{app:se-adaptive}). \emph{(ii)~The optimum is a non-obvious corner.} Best
cells sit at large $N$ but small $D^*{\sim}4$--$16$ and interior
$K^*{\sim}20$--$80$, not at the largest $K,D$; accuracy degrades on either side
of these corners, the nonmonotonicity described earlier in this subsection. MMD
tracks SW throughout, so the TCI+1TDVP-vs-JAM comparison holds under an
independent metric.

\subsection{Mode sharpness and grid resolution}\label{app:supp-sharpness}
The smoothness conditions entering the error theory
(Section~\ref{app:a7}) predict that the
difficulty of a wavefunction flow is set by how sharp the target's modes
are relative to the spatial grid, the dimensionless ratio
$\sigma/\mathrm{d}x = \sigma N / L$. We probe this ratio along two
independent routes on the orthogonal Gaussian mixture (analytic $V_t$, $d{=}8$),
measuring the instantaneous deviation of the flow from the
probability flow action matching reference,
$\mathrm{SW}^{\mathrm{wf}}(\Psi_t) - \mathrm{SW}^{\mathrm{pf}}(t)$.

In Figure~\ref{fig:panelA-sigma} we vary the mode width $\sigma$ at fixed
grid $N{=}32$ (the sharpness route). The deviation grows steadily as the
modes sharpen: at $\sigma{=}1.0$ ($\sigma/\mathrm{d}x{=}4.0$) the flow
tracks the probability flow reference almost exactly, whereas at $\sigma{=}0.35$
($\sigma/\mathrm{d}x{=}1.4$) it departs by up to ${\sim}0.15$ near $t{=}1$.
At every $\sigma$, 2TDVP stays closer to the reference than 1TDVP+SE at
matched $K$, and a larger $K$ helps mainly the sharp-mode panels.

In Figure~\ref{fig:panelN-resolution} we instead fix $\sigma{=}0.5$ and
vary the grid resolution $N$ (the resolution route), reaching the same set
of $\sigma/\mathrm{d}x$ ratios by shrinking $\mathrm{d}x$ rather than
widening the mode. The two routes do not collapse onto a common
curve: along the resolution route the deviation grows with $N$,
opposite to the sharpness route at matched $\sigma/\mathrm{d}x$. Refining
the grid therefore does not by itself ease the flow; it also raises the
Trotter constant (which scales as $d^2 N^4$~\cite{bib:layden2025}), so a
finer grid demands proportionally more Trotter steps. The ratio
$\sigma/\mathrm{d}x$ captures the resolution of the modes but is not, on
its own, a sufficient statistic for the flow error.

\begin{figure}[tbp]
\centering
\includegraphics[width=\textwidth]{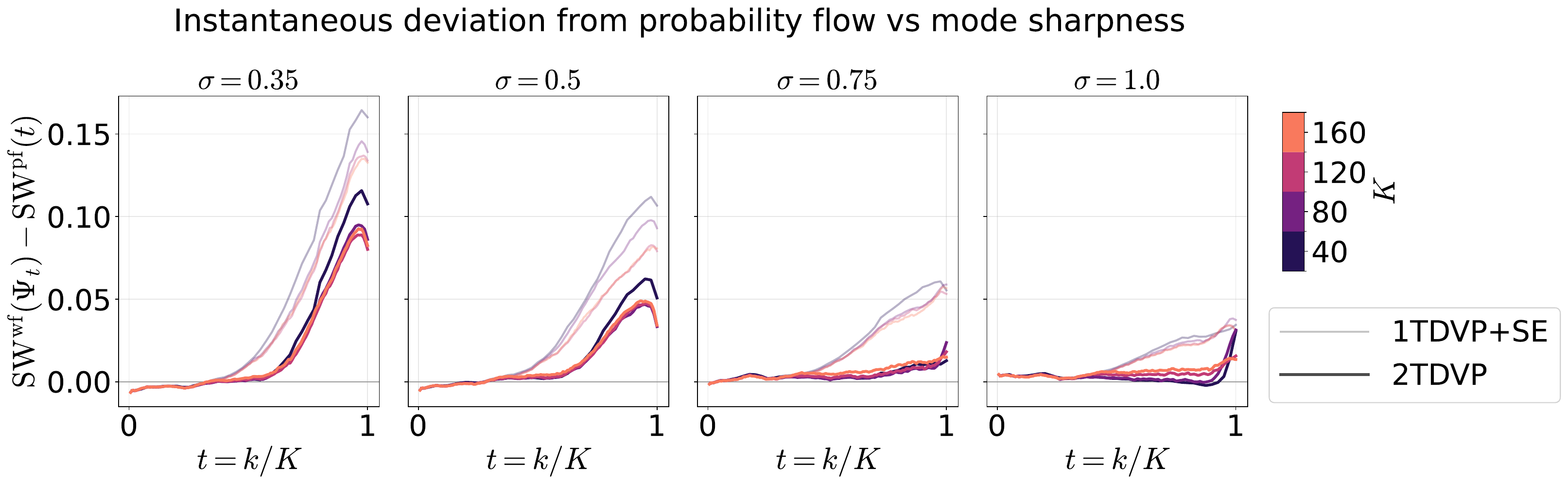}
\caption{\textbf{Sharpness route: instantaneous deviation from the
probability flow versus mode width.} One panel per mode width
$\sigma\in\{0.35,0.5,0.75,1.0\}$ at fixed $N{=}32$ (orthogonal Gaussian mixture,
analytic $V_t$, $d{=}8$); color encodes the Trotter count $K$, with
1TDVP+SE drawn light and 2TDVP dark. The deviation
$\mathrm{SW}^{\mathrm{wf}}(\Psi_t)-\mathrm{SW}^{\mathrm{pf}}(t)$ grows as the modes
sharpen (smaller $\sigma$, smaller $\sigma/\mathrm{d}x$).}
\label{fig:panelA-sigma}
\end{figure}

\begin{figure}[tbp]
\centering
\includegraphics[width=\textwidth]{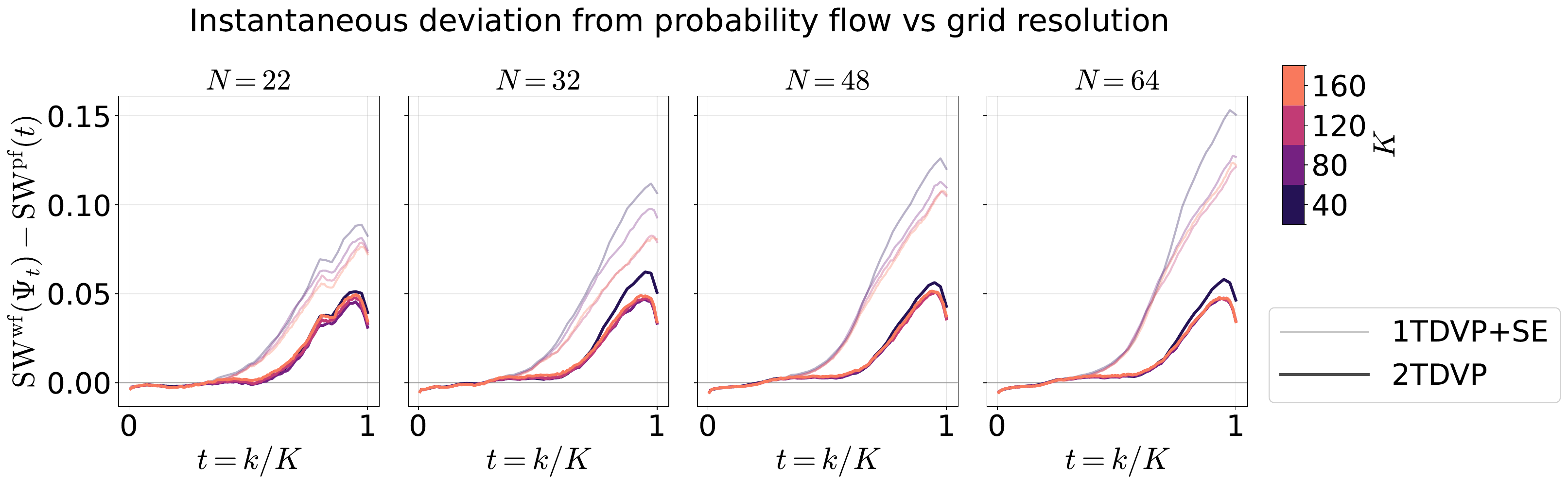}
\caption{\textbf{Resolution route: instantaneous deviation versus grid
resolution.} As in Figure~\ref{fig:panelA-sigma} but at fixed
$\sigma{=}0.5$ with the grid $N\in\{22,32,48,64\}$ varied, reaching the
same $\sigma/\mathrm{d}x$ ratios by refining $\mathrm{d}x$. The deviation
grows with $N$, opposite to the sharpness route at matched
$\sigma/\mathrm{d}x$: the two routes do not collapse because a finer grid
also raises the Trotter constant.}
\label{fig:panelN-resolution}
\end{figure}

\subsection{Cost across Trotter count and mode sharpness}\label{app:supp-ksigma}
In Figure~\ref{fig:panelD-ksigma} we summarize the cost--accuracy tradeoff
across both parameters simultaneously: the total deviation
$\int_0^1 |\mathrm{SW}^{\mathrm{wf}}(\Psi_t) - \mathrm{SW}^{\mathrm{pf}}(t)|\,\mathrm{d}t$
against measured wall-clock time, with color encoding the Trotter count $K$,
marker shape the mode width $\sigma$, and fill the $V$-step (orthogonal
Gaussian mixture, analytic $V_t$, $d{=}8$, $N{=}32$; each $\sigma$ scored against its
own probability flow floor). Sharper modes (smaller $\sigma$) sit higher at fixed
cost, echoing the sharpness route of Section~\ref{app:supp-sharpness}; 2TDVP
reaches lower deviation
than 1TDVP+SE but at roughly an order of magnitude more wall-clock time,
tracing the same cost--accuracy frontier reported in the main text.

\begin{figure}[tbp]
\centering
\includegraphics[width=0.72\textwidth]{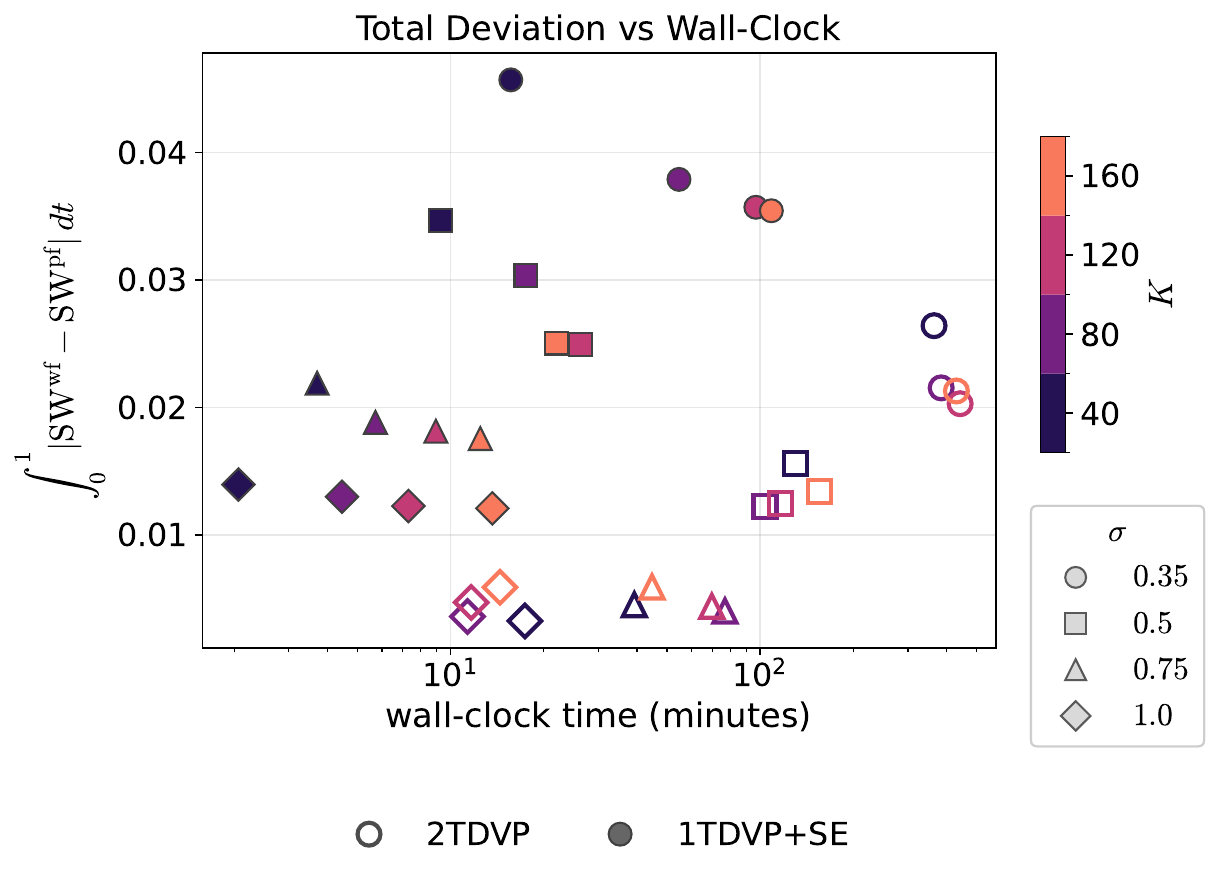}
\caption{\textbf{Cost--accuracy across Trotter count and mode sharpness.}
Total deviation versus measured wall-clock time (orthogonal Gaussian mixture,
analytic $V_t$, $d{=}8$, $N{=}32$). Color encodes the Trotter count $K$,
marker shape the mode width $\sigma$, and fill the $V$-step (open: 2TDVP;
filled: 1TDVP+SE). Lower-left is better; sharper modes and larger $K$ both
move points up and to the right.}
\label{fig:panelD-ksigma}
\end{figure}

\subsection{Analytic versus trained velocity potential}\label{app:supp-analytic-trained}
In Figure~\ref{fig:analytic-trained} we repeat the $\sigma{=}0.5$
$K$-sweep with the two velocity sources side by side: the analytic Gaussian mixture
potential (left) and the pre-trained MPS potential (right). The top row shows the
instantaneous deviation from the probability flow, and the bottom row shows the
total deviation against wall-clock time. The trained oracle reproduces
the analytic-$V$ behavior (with a small deviation that remains controlled until
late $t$ and shrinks with $K$), confirming that the learned cores faithfully drive
the flow. The left column carries only the compression and
integration errors because the velocity is exact; the right column adds
the velocity learning error, and the close agreement between the columns
indicates that this contribution is small. One caveat on the comparison: the two
oracles are not the same function class on the torus. Both trained oracles
are periodic by construction (the MLP through its $\sin/\cos$ features at
period $L$, the trained MPS potential through its modular grid indexing). In contrast, the
analytic $\lVert x\rVert^2/(2t)$ is not, so $\nabla V$ reverses sign at
the seam. With modes at $\pm3$, $\sigma=0.5$, and the wall at $\pm4$,
roughly $2\%$ of target mass sits outside the box and wraps, so a residual
difference between the columns is expected on those grounds alone and
should not be read entirely as oracle quality.

\begin{figure}[tbp]
\centering
\includegraphics[width=0.95\textwidth]{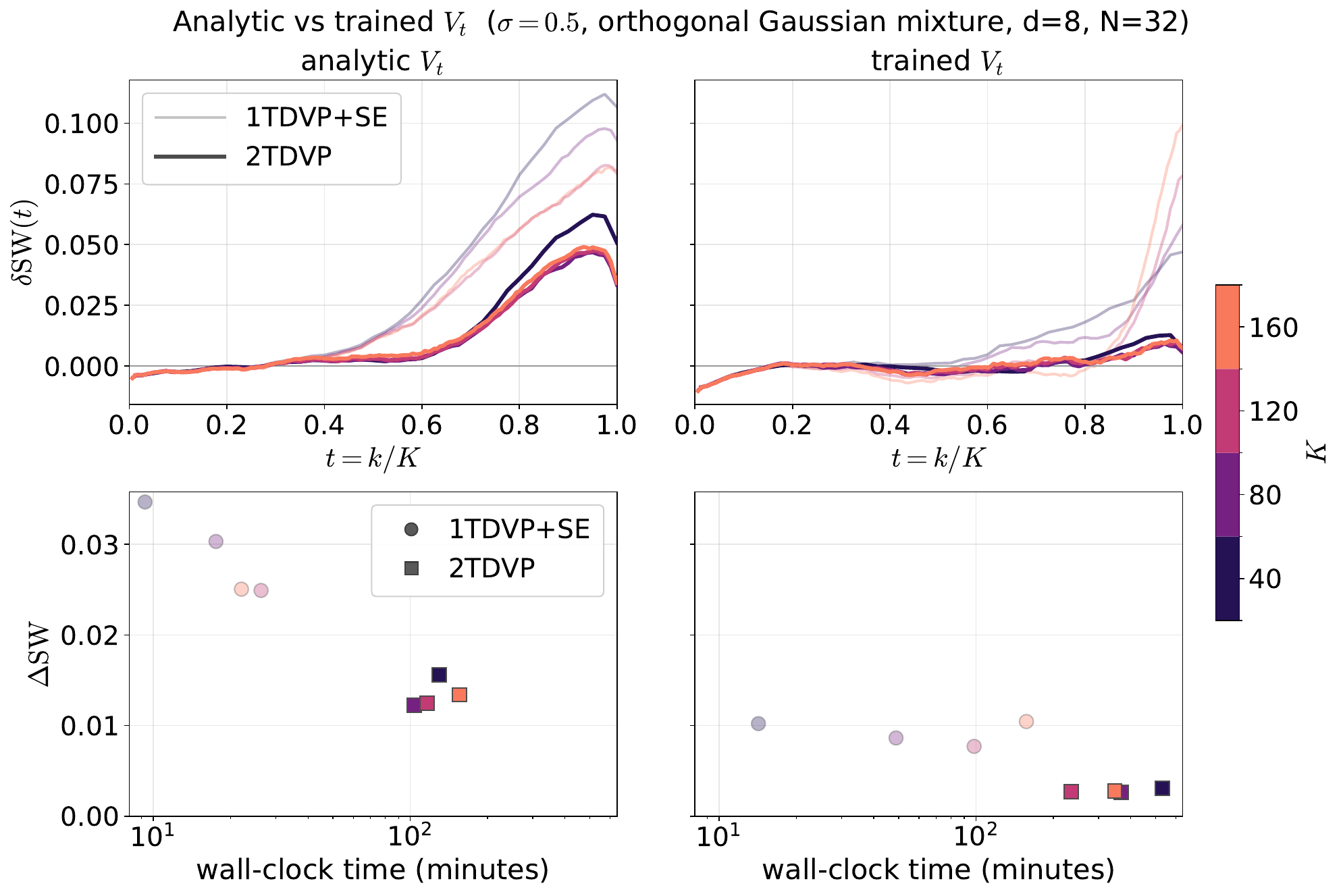}
\caption{\textbf{Analytic versus trained $V_t$ at $\sigma{=}0.5$.}
Orthogonal Gaussian mixture, $d{=}8$, $N{=}32$. Columns: analytic Gaussian mixture velocity (left)
and pre-trained MPS potential (right). Top row: instantaneous deviation from the probability flow,
$\delta\mathrm{SW}(t) = \mathrm{SW}^{\mathrm{wf}}(\psit)-\mathrm{SW}^{\mathrm{pf}}(p_t)$,
where $\mathrm{SW}(q)$ is the distance from samples of the source $q$ to samples of
the target ($\psit$ by Born measurement, the marginal density $p_t$ by transport),
with $K$ by color. Bottom row: the deviation accumulated
over the flow, $\Delta\mathrm{SW} = \int_0^1|\delta\mathrm{SW}|\,dt$, versus
wall-clock time (1TDVP+SE and 2TDVP). The trained oracle reproduces the analytic-$V$ flow,
isolating velocity learning error as small.}
\label{fig:analytic-trained}
\end{figure}

\FloatBarrier

\bibliography{sn-bibliography}

\end{document}